\theoremstyle{definition}
\newtheorem{definition}{Definition}[section]
\newtheorem{theorem}{Theorem}[section]
\newtheorem{lemma}[theorem]{Lemma}
\algnewcommand\algorithmicswitch{\textbf{switch}}
\algnewcommand\algorithmiccase{\textbf{case}}
\algnewcommand\algorithmicassert{\texttt{assert}}
\algnewcommand\Assert[1]{\State \algorithmicassert(#1)}%
\begin{document}
%
\title{Collaborative Computation in Self-Organizing Particle Systems}


\author{\IEEEauthorblockN{Alexandra Porter\textsuperscript{*}}
\IEEEauthorblockA{Computer Science\\
Stanford University\\
Stanford, U.S.A\\
amporter@stanford.edu}

\thanks{\textsuperscript{*}This work was supported in part by NSF under Awards CCF-1353089 and CCF-1422603, and
matching NSF REU awards;  this work was conducted while the first author was an undergraduate student at ASU}
\and
\IEEEauthorblockN{Andr\'ea W.\ Richa\textsuperscript{*}}
\IEEEauthorblockA{Computer Science, CIDSE\\
Arizona State University\\
Tempe, U.S.A\\
aricha@asu.edu}

}


%


\newcommand{\Geqt}{\ensuremath{G_\text{eqt}}}
\newcommand{\Veqt}{\ensuremath{V_\text{eqt}}}
\newcommand{\Eeqt}{\ensuremath{E_\text{eqt}}}
\newcommand{\E}{\mbox{E}}
\newcommand {\OPT}{\ensuremath{\text{OPT}}}
\newtheorem{Claim}{\it Claim}

\maketitle

\begin{abstract}
   Many forms of programmable matter have been proposed for various tasks. We use an abstract model of self-organizing particle systems for programmable matter which could be used for a variety of applications, including smart paint and coating materials for engineering or programmable cells for medical uses. Previous research using this model has focused on shape formation and other spatial configuration problems (e.g., coating and compression). In this work we study foundational computational tasks that exceed the capabilities of the individual constant size memory of a particle, such as implementing a counter and matrix-vector multiplication. These tasks represent new ways to use these self-organizing systems, which, in conjunction with previous shape and configuration work, make the systems useful for a wider variety of tasks. They can also leverage the distributed and dynamic nature of the self-organizing system to be more efficient and adaptable than on traditional linear computing hardware. Finally, we demonstrate applications of similar types of computations with self-organizing systems to image processing, with implementations of image color transformation and edge detection algorithms.
\end{abstract}

\begin{IEEEkeywords}
self-organizing systems; programmable matter; distributed algorithms; binary counter; matrix multiplication
\end{IEEEkeywords}

%
\IEEEpeerreviewmaketitle

\section{Introduction}
The concept of \textit{programmable matter} was first defined by Toffoli and Margolus as a computing medium which can be used dynamically and in arbitrary amounts, controlled by both internal and external events~\cite{Toffoli1991263}.
Examples of programmable matter exist in nature, such as proteins closing wounds, bacteria building colonies, and the construction of coral reefs. These examples indicate potential applications of programmable matter, such as smart paint or coating materials for engineering, programmable cells for medical purposes, or adaptable and recyclable building blocks for everyday objects. These applications require tasks for which programmable matter is uniquely capable, such as shape formation and coating. However, they also require computations resembling those done by traditional computers to process information and make decisions. Work so far using the geometric amoebot model for self-organizing particle systems has focused on spatial configuration, including demonstrating efficient programmable matter algorithms for shape formation, coating, and compression (e.g., \cite{DBLP:conf/spaa/DerakhshandehGR16,DBLP:conf/dna/DerakhshandehGP16,DBLP:conf/dna/DerakhshandehGS15,DBLP:conf/podc/CannonDRR16}).

We introduce solutions using the amoebot model for basic computational tasks exceeding the capabilities of a single particle, including counting or number storage, and matrix-vector multiplication. Basic constructions for computational tasks can then be combined to solve more complex problems. However, many problems are structured in a way that makes a more specialized approach much more efficient. 
 We describe and analyze a binary counter algorithm and a matrix-vector multiplication algorithm using the amoebot model. We also discuss applications of the matrix-vector multiplication to image processing tasks, including color transformations and edge detection.

\subsection{Amoebot Model} \label{sec:model}

In the amoebot model, we represent the particle system as a subset of an infinite, undirected graph $G = (V,E)$, where $V$ is the set of all possible positions a particle can occupy, and $E$ is the set of all possible transitions between positions in $V$ \cite{DBLP:conf/dna/DerakhshandehGS15}.  In the {\em geometric amoebot model} we impose an underlying geometric structure for $G$ in the form of the \emph{equilateral triangular grid}, as shown in Figure~\ref{modelfig}a. Each particle occupies either a single node (i.e., it is {\em contracted}) or a pair of two adjacent nodes (i.e., it is {\em expanded}) on the graph, and each node can be occupied by at most one particle at any point in time, as shown in Figure~\ref{modelfig}b. Two distinct particles occupying adjacent nodes are connected by a {\em bond} and we refer to such particles as {\em neighbors}. The bonds ensure the particle system forms a connected structure and are used for exchanging information.

\begin{figure}
	\includegraphics[width=0.4\textwidth]{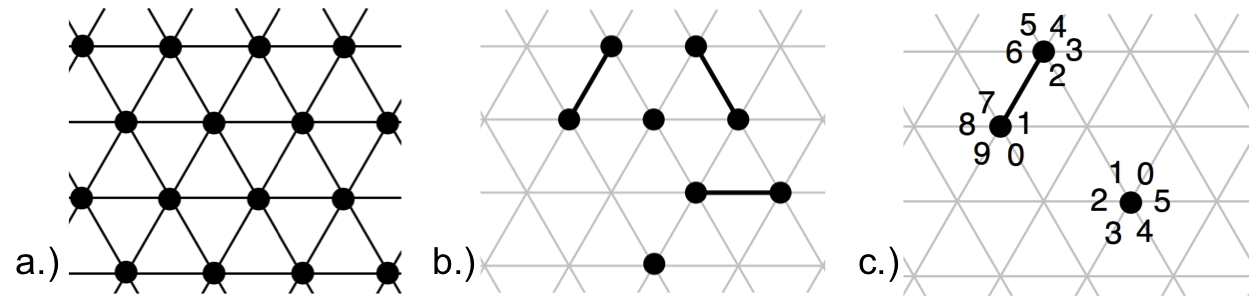}
			\caption{ (a) shows a section of $G$, where nodes of $G$ are shown as black circles.  (b) shows five particles on $G$; the underlying graph $G$ is depicted as a gray mesh; a contracted particle is depicted as a single black circle and an expanded particle is depicted as two black circles connected by an edge.  (c) depicts labeling of bonds for an expanded particle and a contracted particle.}
	\label{modelfig}
\end{figure}

 A particle at position $v \in V$ is {\em anonymous}, meaning it has no globally unique identifier. A particle uses a labeling function $\ell : E(v) \rightarrow \mathbb{N}$ to identify its neighbors and differentiate its possible bonds. Possible labelings for two nodes are shown in Figure~\ref{modelfig}c. Particles may communicate with their neighbors by reading and writing to their shared constant sized memory, which can equivalently be considered as the ability to pass a limited number of bounded-size tokens to adjacent particles. 

Particles move by asynchronously executing a series of \emph{expansions} and \emph{contractions}. If a particle occupies only one node, it is contracted and can expand to an unoccupied adjacent node. An expanded particle can then contract to occupy only one of the two nodes it occupied while expanded.

We assume a compass-free model, meaning there is no global sense of orientation shared by the particles, and we assume that the particles do not share any underlying coordinate system in $G$. In the case of the triangular grid, each particle $p$ fixes an arbitrary head direction, which specifies an adjacent edge $e_{head}$ to $p$. As shown in Figure~\ref{modelfig}, we assume particles have shared \emph{chirality} (sense of clockwise direction) and  so they can label their ports in a consistent direction (note that in the presence of gravity, chirality follows naturally). We assume they label their head and tail ports from $0$ to $5$, starting in an arbitrary direction.

 We assume an asynchronous, concurrent system of particles, where conflicts of movement (e.g., two particles trying to expand into the same empty node location) or shared memory (e.g., two adjacent particles trying to write concurrently onto their shared memory) are resolved arbitrarily so that at most one of the particles involved in the conflict ``wins''. Thus we can rely on the seminal results for the classical asynchronous model in distributed computing (see, e.g.,~\cite{lynch1996distributed}) that state that any asynchronous execution of the system, where conflicts are resolved arbitrarily, produces an equivalent outcome as a sequence of \emph{atomic particle activations}. Hence, we can assume, without loss of generality, that at most one particle is active at any point in time.
 Under this model, we define:
\begin{definition}\label{def:asynchround}
An \emph{asynchronous round} is given by the elapsed time until each particle has been activated at least once.
\end{definition}
In our context, when a particle is activated it can perform an arbitrary bounded amount of computation using its local memory and the shared memory of its neighbors, and at most one movement.

\subsection{Related Work}
There are a number of existing solutions for programmable matter, which can be categorized as active and passive systems. In passive systems, the computational units have no ability to control their motion, so they move and bond only based on their structure and environmental conditions. Passive systems include DNA computing and tile assembly models, in which computation occurs as a result of tiles bonding together in ways controlled by the tile attributes \cite{adleman1994molecular,doty2012theory,patitz2014introduction,woods2015intrinsic}.  Work on tile assembly considers computational problems similar to those we study, including demonstration of a binary counter \cite{rothemund2000program}.  However, the specifications of those systems (passive motion, unlimited supply of tiles of any type, etc.) differ considerably from ours. Active systems consist of computational units that control their actions, motions, and communications to accomplish specific tasks. Applications of active systems, including shape formation, coating, and compression, have been explored using robotic implementations  \cite{arbuckle2010self,das2010computational,kernbach2013handbook,woods2013active}. These applications have also been explored using the amoebot model, which is an active system \cite{DBLP:conf/spaa/DerakhshandehGR16,DBLP:conf/dna/DerakhshandehGP16,DBLP:conf/dna/DerakhshandehGS15,DBLP:conf/podc/CannonDRR16,DBLP:conf/nanocom/DerakhshandehGR15}. 

Classical algorithms for distributed matrix multiplication include Fox's \cite{fox1987matrix} and Cannon's \cite{cannon1969cellular}. These divide matrices into consecutive blocks to perform multiplication. More recent algorithms, including the Scalable Universal Matrix Multiplication Algorithm (SUMMA) \cite{van1997summa} and Distribution-Independent Matrix Multiplication Algorithm (DIMMA) \cite{choi1997new}, further reduce the number of necessary operations. In SUMMA, the matrix is divided into rows and columns of blocks, and values are then broadcast down columns and across rows. DIMMA improves on this by adding pipelining to communication and taking advantage of a Least-Common Multiple strategy to reduce computation requirements. Our simpler algorithm for matrix-vector multiplication broadcasts values down columns of the matrix in a way similar to how values are broadcast in SUMMA and DIMMA.

In the field of computer vision and image processing, matrix multiplication is used to apply operators for fundamental tasks including determining gradient (i.e.~\cite{sobel1990isotropic}) and measuring color invariants such as luminance~\cite{geusebroek2001color}. Basic color transformations operators, such as adjustments to brightness, saturation, and hue are also often used in image editing~\cite{haeberli1993matrix}.

An application of these image operators is edge detection, which is an important problem due to its applications in feature extraction and recognition. The edge detection algorithm introduced by Canny uses a series of steps including smoothing, filtering, and thresholding to extract edges from an image \cite{canny1986computational}. Research has been done into how to implement this method efficiently, including a distributed GPU implementation \cite{ogawa2010efficient} and other real-time methods using field-programmable gate arrays  \cite{gentsos2010real} and adaptive thresholding for improved accuracy \cite{neoh2004adaptive}.

Color transformations have also been combined with gradient operators (including those used in edge detection) for assessing feature similarity~\cite{zhang2011fsim} and quality of images~\cite{bae2016novel}. Other applications of color transformations include feature detection and analysis, particularly for human faces~\cite{plataniotis2013color}.

\subsection{Our Contributions}
We address the very basic and general problems of \emph{counting} and \emph{matrix-vector multiplication}. We describe the image processing applications of \emph{edge detection} and \emph{color transformations}, as examples of applications that can use and benefit from our matrix-vector and matrix-matrix multiplication setup and algorithms.  We assume each instance of these problems is fed into our particle system as a sequence of values passed through a seed particle. Results are stored distributed across the system, and can be output by each particle individually or passed to the seed to output the result as a data stream.

We present an algorithm for a basic {\em binary counter} using the amoebot model, and show that it counts to a value $v$ in {\em $O(v)$ asynchronous rounds}. We also present a two-part algorithm for {\em matrix-vector multiplication} using the amoebot model. The first part of the algorithm is to self-organize particles to set up the input matrix and vector and the resulting vector entries. The second part of the algorithm distributedly performs the actual multiplication (note that these two algorithmic components run concurrently and there is no need for synchronization). Let $h$ and $w$ denote the number of rows and columns of the matrix (unknown to the set of particles). We show that the {\em number of asynchronous rounds} it takes to {\em set up} the matrix and vector entries is $O(hw)$ and the {\em number of rounds required for matrix-vector multiplication} is $O(h+w)$. Extending this result by executing a sequence of matrix-vector multiplications, the {\em number of rounds required for matrix-matrix multiplication} is $O(y(h+w))$ with a second matrix of height $w$ and width $y$.  We describe and analyze a simple implementation of {\em Canny edge detection} in image processing, which utilizes the setup algorithm introduced for matrix-vector multiplication. We show this implementation requires {\em $O(1)$ rounds to complete edge detection} after the $O(hw)$ setup is completed (again no synchronization between these two algorithmic phases is needed). Finally, we describe the image processing application of \emph{color transformation}, which is setup in the same way with $O(hw)$ rounds and then requires $O(y(h+w))$ rounds for multiplication. We provide also experimental results on actual implementations of the Canny edge detection algorithm and the color transformation algorithm we consider.

\section{Preliminaries}
\label{prelimsection}
In each of the problems considered here, we categorized particles as being either in the {\em structure} built for the operation or as {\em free} particles.  
\begin{definition} 
At any point during the execution of the algorithm the {\em structure} refers to the set of particles recruited for use in some operations and assigned a specific role and position for that operation. They are in one of the states $\{${\em seed, matrix, vector, counter, prestop, result}$\}$.
\end{definition}
\begin{definition}
  At any point during the execution of the algorithm, the set of free particles consists of those particles that are not yet assigned a specific purpose. They are in one of the states $\{${\em leader, follower, inactive}$\}$. 
  \end{definition}
  Free particles may eventually become part of the structure or remain available for other uses. As free particles they actively move to make themselves available to extend the structure if needed, but may continue moving indefinitely if they are not recruited. Particle states are defined as the corresponding algorithms are presented in Sections~\ref{sec:binaryparticlecounter},~\ref{sec:matrixmult}, and~\ref{sec:edgedetect}.
  
  \emph{Tokens} are small structures (of constant size) of data which are held by exactly one particle at a time during their existence. Respecting the particles' memory constraints, each particle holds at most a constant number of tokens at any time. Configurations and schedules are defined for a set of particles and will be used to analyze the progress of the entire system toward the final goal. 
  
    \begin{definition}
    A {\em configuration} of the particle system at a point in time consists of the set of state variables $P_j$ for each particle $j$, including position, current state, and tokens held.
    \end{definition}

    We use $p_C(t)$ to describe the position of token $t$ at configuration $C$: If particle $j$ holds $t$ in configuration $C$, then $p_C(t)=j$ (ownership of $t$ is indicated in $P_j$).  Tokens travel through a predefined sequence of nodes, regardless of which particles occupy those nodes during the execution of the algorithm.
   \begin{definition}\label{tokenpathdef}
     A \emph{token path} of length $m$ is a set of particles $P_{k_1},P_{k_2},...,P_{k_m}$ such that $P_{k_l}$ is adjacent to $P_{k_{l+1}}$ and one or more tokens travel from $P_{k_x}$ to $P_{k_y}$ passing through only particles in the path for some $x,y$ with $1\leq x <y \leq m$.
     \end{definition}

    We consider a configuration $C$ to be valid if the system is connected (including both the structure and free particle set) and each particle is either contracted or expanded into adjacent positions with no single position occupied by two particles. When clear from context, we will refer to the particle $j$ and $P_j$ indistinctly.

In an asynchronous execution, the system progresses through a sequence of \emph{asynchronous rounds} (Definition~\ref{def:asynchround}). When a particle $P_j$ is activated during an asynchronous round, if it holds a token $t$ it can pass $t$ to any neighbor which has available token capacity at the time of the current activation of $P_j$.

\section{Binary Particle Counter Algorithm}\label{sec:binaryparticlecounter}
The first computational application of the amoebot model we analyze is a binary counter. The binary counter we describe here will also be used as a primitive for the matrix-vector multiplication algorithm presented in Section~\ref{sec:matrixmult}. In this implementation, the system contains only the seed particle and a set of initially inactive particles, already forming a line with the seed at the end at round $0$.\footnote{If a line of particles is not readily available, one can easily build one following the algorithm presented in~\cite{DBLP:conf/dna/DerakhshandehGS15} concurrently with the binary counting procedure -- i.e., there is no need for synchronization of the phases, as it happens in the matrix-vector multiplication algorithm presented below.} We denote the non-seed particles $P_0,...,P_{n-1}$ such that $P_0$ is a neighbor of the seed and labeling follows the line of particles moving away from the seed. Each non-seed particle represents a digit of the counter, with the particle in line closest to the seed representing the least significant bit of the counter. The value of the system as a whole can then be calculated using the state of each digit particle to determine the value it represents. In this analysis we consider both the number of tokens held by a particle $P_j$, denoted $P_j.count$, and a display property the particle computes based on state, denoted $P_j.display$. Each $P_j$ with $j<n-1$ receives tokens only from  $P_{j-1}$ (or $S$ if $j=0$). The method $P_j.tokenAvailable()$ returns true if there is a token in the shared memory for $P_j$ and $P_{j-1}$. If such a token exists, the method $P_j.takeToken()$ allows $P_j$ to pick it up. The method $P_j.canSendToken()$ determines if $P_j$'s shared memory with $P_{j+1}$ has space available. If so, $P_j.sendToken()$ allows $P_j$ to put a token in shared memory with $P_{j+1}$. Finally, $P_{j}.discardToken()$ removes the token from memory completely. As described in Algorithm~\ref{binarycounter}, when $P_j$ reaches its token capacity, here defined as two, it discards one token ($P_j.discardToken()$) and attempts to send the other, representing a carryover, to $P_{j+1}$, using $P_j.sendToken()$ (conditional on $P_j.canSendToken()$). 
  \begin{algorithm}[H]
\caption{Binary Counter Particle $P$}\label{binarycounter}
\begin{algorithmic}[1]
\Procedure{BinaryCounter}{}
\If{$P.count<2$ and $P.tokenAvailable()$}
\State $P.takeToken()$
\State $P.display = 1-P.display$
\State $P.count= P.count+1$
\EndIf
\If{$P.count = 2$ and $P.canSendToken() = true$}
\State $P.sendToken()$
\State $P.discardToken()$
\State $P.count = 0$
\EndIf
\EndProcedure
\end{algorithmic}
\end{algorithm}

The seed behaves as an interface to the counter, as described in Algorithm~\ref{binarycounterseed}.  It receives activations from an external source to increment the counter, upon which it constructs new tokens and sends those to $P_0$ if  there is space in the shared memory with $P_0$, i.e. when $S.canSendToken()$ returns true.
 \begin{algorithm}[H]
\caption{Binary Counter Seed Particle $S$}\label{binarycounterseed}
\begin{algorithmic}[1]
\Procedure{BinaryCounterSeedActivated}{}
\If{$P.canSendToken() = true$}
\State Create new token $t$
\State $S.sendToken()$
\EndIf
\EndProcedure
\end{algorithmic}
\end{algorithm}


\subsection{Parallel Schedules}
     
All of our algorithms, presented in Sections~\ref{sec:binaryparticlecounter},~\ref{sec:matrixmult}, and~\ref{sec:edgedetect}, follow an asynchronous execution. However, for the analyses of these algorithms, we considered  executions according to parallel schedules, since those are easier to handle and will provide a worst-case scenario in terms of number of rounds for asynchronous schedules. In a parallel execution, the system progresses through a sequence of \emph{parallel rounds}. 

\begin{definition} \label{parallelround}
During one \emph{parallel round} starting with configuration $C$ and resulting in configuration $C^*$, one of the following is true for each particle $p$:
\begin{enumerate}
\item $p$ occupies the same node(s) in $C$ and $C^*$,
\item $p$ occupies one node in $C$ and expands to an additional adjacent node during the round,
\item $p$ occupies two adjacent nodes in $C$ and contracts to a single node during the round, leaving the other node empty in $C^*$, or
\item $p$ occupies two adjacent nodes in $C$ and contracts in a handover such that in $C^*$ a different particle has expanded into one of the nodes $p$ occupied in $C$.
\end{enumerate}
Additionally, for each token $t$, let $P_k$ be such that $k = p_C(t)$. Then at the end of the parallel round one of the following is true:\begin{enumerate}
\item $p_{C^*}(t) = p_{C}(t)$,
\item if a particle $P_{k'}$ adjacent to $P_k$ is below capacity in $C$, $p_{C^*}(t) =k'$, or 
\item  if there is a token path length $d$ (labeled as particles $P_{k_1},...,P_{k_d}$),  for each $1\leq l \leq d-1$ the particle  $P_{k_l}$ in the path has a token $t_{l}$ (such that $t =t_l$ for some $l$) which needs to move to $P_{k_{l+1}}$, and $P_{k_l}$ has available token capacity, then $p_{C^*}(t_l) = p_{C}(t_l)+1 $ for each $1\leq l \leq d-1$.
\end{enumerate}
 \end{definition}
\begin{definition} \label{syncdef}
A movement schedule $(C_0,C_1,...C_f)$ is a \emph{parallel schedule} if each $C_i$ is a valid configuration and for each $i \geq 0$, $C_{i+1}$ is reached from $C_i$ in exactly one parallel round. 
 \end{definition}
  In asynchronous execution, the system progresses through a sequence of \emph{particle activations}, meaning only one particle is active at a time. When activated, a particle can perform an arbitrary bounded amount of computation (including passing tokens) and make at most one movement. An \emph{asynchronous round} is the elapsed time until each particle has been activated at least once. When a particle $P$ is activated, if it holds a token $t$ it can pass $t$ to any neighbor which has available token capacity at the time of the current activation of $P$.
 
 \begin{definition}
 \label{asyncdef}
 A movement schedule $(C_0,C_1,...C_f)$ is an \emph{asynchronous schedule} if each $C_i$ is a valid configuration and for each $i \geq 0$, $C_{i+1}$ is reached from $C_i$ by execution of one asynchronous round.  
  \end{definition}

\subsection{Runtime Analysis}\label{sec:analysisbinaryparticle}
In this section we provide a brief, high-level sketch of the proof that shows that a counter with $n$ particles can count to $v$ (where $v\leq 2^n-1$) in $\Theta(v)$ asynchronous rounds (the proofs and more details can be found in the Appendix). Our proof relies on a dominance argument of parallel schedules by asynchronous executions, in terms of the number of rounds needed to complete the algorithm. 

\begin{lemma}\label{asyncdominates}
For any asynchronous particle activation sequence $A$, there exists a parallel schedule ${\cal P}$ such that the number of asynchronous rounds needed by the binary counter algorithm according to $A$ is at most equal to the number of parallel rounds required by the algorithm following ${\cal P}$.
\end{lemma}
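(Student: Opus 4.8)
The plan is to establish the domination by an explicit round-by-round coupling between the given asynchronous execution $A$ and a parallel schedule $\mathcal{P}$ that we construct on the fly, maintaining the invariant that the parallel configuration has always made no more progress than the asynchronous one. First I would fix a progress order $\preceq$ on valid configurations of the counter that is tailored to the token flow: declare $D \preceq C$ when, after matching up the (possibly already merged) tokens of the two configurations, every token sits no farther from the seed $S$ in $D$ than its counterpart in $C$, with the per-particle counts and display bits consistent with those positions. The structural facts that make this a workable order are that token flow is one-directional (from $S$ toward $P_{n-1}$), that the carry rule only ever replaces two tokens at $P_j$ by one token advancing to $P_{j+1}$, and that the fully counted state $T$ (the stable binary representation of $v$) is the unique $\preceq$-maximal reachable configuration, since no carry can advance past its final resting position.

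The heart of the argument is the inductive construction of $\mathcal{P}$. Writing $C_0^A, C_1^A, \dots$ for the configurations of $A$ at the ends of successive asynchronous rounds and $C_0^P, C_1^P, \dots$ for those of $\mathcal{P}$, I set $C_0^P = C_0^A$ and, assuming $C_i^P \preceq C_i^A$, define parallel round $i{+}1$ as follows. Because an asynchronous round activates every particle at least once, a carry can propagate through several particles within that single round, each successive holder being activated later in the same round, whereas the token-path rule of Definition~\ref{parallelround} advances any given token by at most one position per parallel round. For each coupled token I therefore advance it in $\mathcal{P}$ by the minimum of one step and the net forward progress it made during round $i{+}1$ of $A$, bundling the tokens that advance simultaneously into token-path moves. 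Since a parallel round cannot move a token farther than the coupled asynchronous round did, the inequality $C_{i+1}^P \preceq C_{i+1}^A$ is preserved; the merges are mirrored the same way, two coupled tokens being fused into a carry in $\mathcal{P}$ exactly when they were fused in $A$, which by the lag happens no earlier in $\mathcal{P}$.

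With the invariant in hand the conclusion is immediate. If $A$ terminates in $R_A$ rounds, then $C_i^A \neq T$ for every $i < R_A$; since $T$ is the unique $\preceq$-maximum, $C_i^P \preceq C_i^A$ forces $C_i^P \neq T$ for all $i < R_A$ as well. Hence $\mathcal{P}$ has not terminated before round $R_A$, and extending it greedily to completion yields a valid parallel schedule using at least $R_A$ parallel rounds, which is exactly the stated domination.

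I expect the main obstacle to be the bookkeeping around the carry (merge) operations inside the invariant. I must choose the token matching and the order $\preceq$ so that (a) the two-tokens-become-one-carry event remains compatible with ``parallel lags asynchronous,'' (b) the token capacity of two per particle is never violated in the lagging configuration $C_i^P$, and (c) the advances I schedule genuinely satisfy the token-path preconditions of Definition~\ref{parallelround}, namely that each mover needs to move to a neighbor with free capacity. Verifying that a valid parallel round realizing the capped per-token advance always exists---that lagging never forces an illegal move or a capacity overflow---is the step needing the most care, and it is precisely the monotone, one-directional structure of the counter that makes it go through.
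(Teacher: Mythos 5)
Your overall architecture---a round-by-round coupling, a ``parallel lags asynchronous'' invariant closed by induction, and the observation that the final configuration is the unique maximum of the progress order---matches the paper's. But the way you instantiate $\mathcal{P}$ inverts the quantification in a way that drains the lemma of its content. You build $\mathcal{P}$ \emph{from} $A$, advancing each token by the minimum of one step and its net progress in $A$, so that $\mathcal{P}$ lags by construction. That does witness the literal existential, but so does a parallel schedule that idles for arbitrarily many rounds before running greedily (staying put is always permitted by Definition~\ref{parallelround}); on that reading the lemma asserts nothing useful. Its actual content---and what the proof of Theorem~\ref{counter:asyncOn} uses---is that the \emph{greedy} parallel schedule, the one whose round count Lemma~\ref{counter:syncOn} bounds by $O(v)$, never overtakes $A$. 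Your bespoke lazy-then-greedy $\mathcal{P}$ is not the schedule analyzed there, and bounding its length by $O(v)$ would require already knowing that $A$ finishes in $O(v)$ rounds, which is circular. The paper instead fixes $\mathcal{P}$ to be the greedy parallel execution and proves by induction that it cannot get ahead of the asynchronous one.

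There is also a concrete well-definedness gap in your construction. A token leaves $P_j$ only when $P_j$ holds two tokens (Algorithm~\ref{binarycounter}), and in the lagging configuration $C_i^P$ the partner token that triggered the send in $A$ may not yet have arrived; the prescribed one-step advance is then not a legal step of the algorithm. You identify this as the step needing the most care but leave it unresolved, and the natural repair---advance only when legal---preserves the lag invariant but makes $\mathcal{P}$ still lazier, worsening the first problem. The paper's Case~2 is precisely the argument that disposes of this situation for the greedy schedule: if the token at $P_j$ did not advance in the asynchronous round, then $P_{j-1}$ was not holding two carryovers there; the inductive hypothesis then rules out $P_{j-1}$ holding two in the parallel configuration (one of them would be farther along than in the asynchronous one); hence the greedy parallel schedule does not advance the token either, and no overtaking occurs. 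I would redo the proof with $\mathcal{P}$ fixed as the greedy schedule and the induction showing $C_i \preceq_B C_i'$ for every round $i$.
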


We can then count the total number of bit flips that occur in the counter to get the result:
\begin{lemma}\label{counter:syncOn}
The parallel binary counter algorithm counts to the value $v$ in $O(v)$ parallel rounds.
\end{lemma}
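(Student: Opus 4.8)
The plan is to bound the number of parallel rounds by the \emph{total amount of token work} performed over the whole execution, and then to show that this total work is $O(v)$ via the standard amortized analysis of an incrementing binary counter. Concretely, I would first count, for each digit particle $P_j$, the number $T_j$ of tokens it receives (and hence takes) while the system counts to $v$. The seed injects exactly one token per increment, so $T_0 = v$. By lines 7--10 of Algorithm~\ref{binarycounter}, $P_j$ forwards a carry to $P_{j+1}$ exactly once for every two tokens it takes, which gives the recurrence $T_{j} = \lfloor T_{j-1}/2\rfloor$ and therefore $T_j = \lfloor v/2^{j}\rfloor$. Counting one token transfer into $P_j$ for each token it receives (the seed-to-$P_0$ injection included), the total number of transfers over the entire run is $\sum_{j\ge 0} T_j = \sum_{j\ge 0}\lfloor v/2^{j}\rfloor < 2v$. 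Each transfer carries with it only an $O(1)$ bundle of companion operations (one display flip on a take, one discard on a carry), so the total number of token operations is $O(v)$. This is exactly the familiar fact that performing $v$ increments on a binary counter triggers only $O(v)$ bit changes.

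Next I would relate the number of rounds to this total work through a progress (no-deadlock) lemma: in every parallel round in which the algorithm has not yet terminated, at least one token is transferred. If an increment is still pending and $P_0$ has free capacity, the seed injects, producing a transfer. Otherwise $P_0$ is at capacity, and I would examine the maximal run $P_0,\dots,P_k$ of particles currently at capacity. Because $v \le 2^{n}-1$ we have $T_{n-1} = \lfloor v/2^{n-1}\rfloor \le 1$, so the most significant digit never needs to carry out and this run cannot reach the end of the line; hence there is always a particle $P_{k+1}$ with spare capacity. Case~3 (the token-path rule) of Definition~\ref{parallelround} then shifts the entire chain of carries forward by one, yielding at least one transfer. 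The same argument governs the tail phase after the last injection, when only pending carries remain to settle.

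Combining the two ingredients finishes the argument. If the algorithm runs for $R$ non-terminal parallel rounds and round $i$ performs $x_i \ge 1$ transfers, then $R \le \sum_i x_i$, and the right-hand side is the total number of transfers, which is $O(v)$; adding the $O(1)$ terminal round gives a count of $O(v)$ parallel rounds. Note that this is consistent with the $\Omega(v)$ lower bound, since the seed injects at most one token per round and $v$ injections are unavoidable.

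I expect the main obstacle to be making the progress lemma fully rigorous rather than the amortized count, which is routine. The delicate points are (i) checking that the token-path rule of Definition~\ref{parallelround} genuinely applies to a maximal run of at-capacity particles -- that is, that each such particle really holds a carry token that ``needs to move'' and that the first downstream particle has room to receive it -- and (ii) confirming the no-overflow invariant $T_{n-1}\le 1$, which is precisely what guarantees the carry chain can never be blocked at the high-order end. Once the system is certified never to be stuck, the $O(v)$ work bound translates directly into the $O(v)$ round bound.
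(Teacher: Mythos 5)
Your proposal is correct and follows essentially the same route as the paper's proof: an amortized count showing that the total number of token movements (equivalently, display-bit flips) is $\sum_{j}\lfloor v/2^{j}\rfloor < 2v$, combined with a progress lemma that at least one token moves in every non-terminal parallel round. The only cosmetic difference is the witness used for progress --- you advance a maximal at-capacity chain via the token-path rule of Definition~\ref{parallelround}, while the paper advances the single frontmost carryover not yet at its final position --- but both are the same two-ingredient argument.
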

Combining these two results, we get:
\begin{theorem}\label{counter:asyncOn}
The asynchronous binary counter counts to the value $v$ in $\Theta(v)$ asynchronous rounds.
\end{theorem}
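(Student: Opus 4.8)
The statement is a $\Theta$ bound, so the plan is to prove the matching upper and lower bounds separately. The upper bound $O(v)$ is essentially immediate from the two preceding lemmas. By Lemma~\ref{asyncdominates}, for any asynchronous activation sequence $A$ the number of asynchronous rounds the counter needs is bounded above by the number of parallel rounds of some parallel schedule $\mathcal{P}$; by Lemma~\ref{counter:syncOn}, every parallel schedule counts to $v$ in $O(v)$ parallel rounds. Chaining these two inequalities shows that $A$ counts to $v$ in $O(v)$ asynchronous rounds, and since $A$ was an arbitrary activation sequence this holds for every asynchronous execution.

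For the lower bound $\Omega(v)$, the plan is to argue from the token traffic the algorithm is forced to generate rather than from the dominance relation. To reach the value $v$, the seed must create and forward exactly $v$ tokens into $P_0$: Algorithm~\ref{binarycounterseed} creates at most one token per activation, and the counter value increases by one only when a fresh token is absorbed by $P_0$ (equivalently, the least significant bit $P_0.display$ must flip $v$ times, each flip requiring a distinct execution of $P_0.takeToken()$). I would then bound how many such injections can be charged to a single round by exploiting the capacity bound: the shared memory between the seed and $P_0$ holds at most two tokens, so to push more than a constant number of tokens across this interface $P_0$ must repeatedly drain by emitting carries up the line. Bounding the number of tokens that can cross the seed--$P_0$ bottleneck per round by a constant and summing over rounds then yields that at least $\Omega(v)$ rounds must elapse before all $v$ tokens are injected, and hence before the value $v$ can be displayed.

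The main obstacle is making this throughput argument rigorous in the asynchronous model, and I do \emph{not} expect Lemma~\ref{asyncdominates} to help here: it bounds the asynchronous round count from above by the parallel one, which is the wrong direction for transferring a lower bound. The delicate point is that an asynchronous round only guarantees that each particle is activated \emph{at least} once, so a single round may contain many activations of the seed, of $P_0$, and of the low-order particles; a naive ``one token per round'' accounting therefore fails, and in an interleaving with unboundedly many activations the value could in principle advance far in a single round. The careful step is thus to pin down the right per-round invariant—showing that, regardless of how activations are interleaved, the handover and capacity-two constraints of Definition~\ref{parallelround} limit the number of tokens that can traverse the seed--$P_0$ link during one round to a constant (or, failing a purely internal bound, to appeal to the assumption that the external increment stream feeds the seed at a bounded rate per round). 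Establishing this invariant is the crux; once it is in hand, the $v$ required injections force $\Omega(v)$ rounds, and combining with the upper bound gives the claimed $\Theta(v)$.
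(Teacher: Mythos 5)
Your proposal matches the paper's proof essentially exactly: the upper bound is obtained by chaining Lemma~\ref{asyncdominates} with Lemma~\ref{counter:syncOn}, and the lower bound comes from the observation that all $v$ carryover tokens must be created at the seed and individually handed across the seed--$P_0$ interface, which admits only $O(1)$ handoffs per round. The ``delicate point'' you flag about multiple activations per round is real, but the paper does not address it either---it simply asserts that $S$ and $P_0$ can perform at most one token handoff per round---so your version is, if anything, more candid about where the rigor is thin.
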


\section{Particle Matrix Multiplication Algorithm}\label{sec:matrixmult}
	The next computational problem we solve using the amoebot model is matrix-vector multiplication. As before, the seed acts as a source of external input into the system. We suppose the system is initially unaware of the dimensions or values of the matrix and vector to be multiplied, so they will enter the system through the seed particle. The stream of information entering the system from the seed can contain values of matrix or vector entries (we assume each fits on a single particle), end of column markers, and end of vector markers. The seed particle at no point computes the dimensions of the problem since it receives values online in sequence from an external source. The seed then passes values, encapsulated in tokens, into the system as the algorithm proceeds. 
			
	Let $A$ be a $h\times w$ matrix and $\vec{x}$ be a $w\times 1$ vector for some nonzero integers $h$ and $w$. The result of the matrix vector multiplication $A\vec{x}$ is then $\vec{b}$, which is stored using a set of counters described in Section~\ref{setupsection}. The problem is streamed into the system in the order: values for each matrix column from top to bottom, left to right, followed by the values of $\vec{x}$ ordered from top to bottom. As vector values reach their final positions, vector particles also generate result counter tokens, which are passed along to determine how many particles should position themselves to store the results of the multiplication. As shown in Figure~\ref{matrixsetupfig}a., particles assigned to represent values of $\vec{x}$ are positioned across the top of those representing matrix $A$, such that the line of matrix particles directly below a vector particle is the corresponding matrix column. The vector value is then passed down the column and used by each matrix particle it reaches to produce an individual product. Products are then passed across the row of matrix columns to where the set of result particles are positioned to store the product totals.
		
	This algorithm can also be extended to complete matrix-matrix multiplication. To multiply matrices $A$ and $C$, the setup is the same as before but with the first column of $C$, $\vec{c_0}$ replacing the vector $\vec{x}$. If $C$ has a width of $y$, after each column $\vec{c_i}$ is multiplied by $A$, for $i<y$, we add a new set of results particles to store the vector $\vec{b}_i$. Thus the entire result matrix $B$ can be stored as series of vectors $\vec{b}_0,\vec{b}_1,...,\vec{b}_{y-1}$, as shown in Figure~\ref{matrixsetupfig}b.

	\begin{figure}[H]		\centering 
	\includegraphics[width=0.5\textwidth]{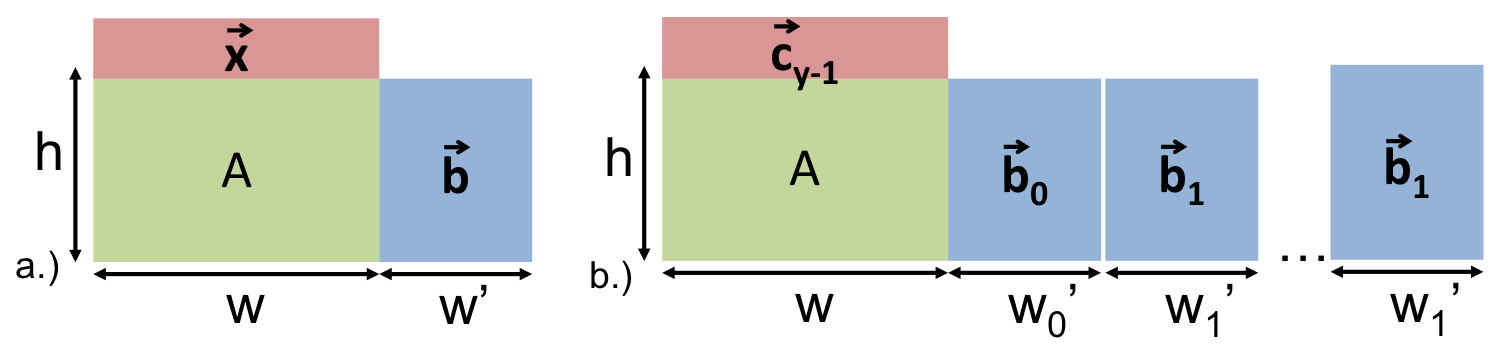} 
					\caption{a.) General matrix-vector multiplication A$\vec{x}$ setup for $h\times w$ matrix $A$ and $w\times 1$ vector $\vec{x}$. b.) General matrix-matrix multiplication $AC$ for $h\times w$ matrix $A$ and $w\times y$ matrix $C$. Shown during final matrix-vector multiplication $A\vec{c}_{y-1}$.}\label{matrixsetupfig}

\end{figure}
The matrix, vector, and result particles do not know their indices relative to the whole system but can orient themselves such that they know which direction is across the matrix row and which direction is down the matrix column. To multiply a matrix by multiple vectors in a stream, this setup only needs to be executed once. If a finished notification is sent to the seed after each matrix-vector multiplication completes, an additional vector can be used without any changes to the matrix.

\subsection{Description and Correctness}\label{setupsection}
	We refine the notation of a configuration from Section~\ref{prelimsection} to specify the particles' functions in the final system. Let \\
	$C_i = (M_{0,0},M_{0,1}...M_{0,w-1},M_{1,0}...M_{h-1,w-1},$\\$R_{0,0},R_{0,1},...,R_{0,w'},...R_{1,0}...R_{h,w'},V_{0},V_1...V_{w-1})$\\
	 be the configuration at round $i$ where $M_{u,v}$ is the configuration of the particle which will eventually be the matrix particle at position $(u,v)$, $R_{u,q}$ will be a result particle at position $(u,q)$ in the results matrix, and $V_v$ is the vector particle at index $v$ in vector $\vec{x}$. Let $c$ be the token capacity of matrix, vector and result particles, and let $m$ be the maximum value of a matrix or vector entry. We then use $w'$ to denote the number of columns of results particles constructed, so $0\leq u<h$, $0\leq v <w$, and $0\leq q < w'$. Enough result columns are constructed to hold the maximum possible number of tokens generated, so $w' = \lceil{\log_c(m^2w)}\rceil$. Finally, we denote the minimum number of particles necessary to complete setup as $n'$, so $n' = hw+w+hw'$. Since particles are given tasks on a first-come, first-serve basis, particles that remain free particles throughout execution do not have any effect on the correctness of the system. Particles are categorized in configurations based on their final location, but are all initially free particles and begin by executing the spanning forest algorithm in~\cite{DBLP:conf/dna/DerakhshandehGS15}, making their initial states leaders and followers (for completeness, we present the spanning forest algorithm in Appendix~\ref{spanningforestappendix}).
	
Tokens travel in a predetermined direction in the set of matrix, vector, and result particles. For clarity, we extend the range of the position function $p(t)$ for token $t$ to be ordered pairs representing position in a two-dimensional arrangement of system particles. 

Particles are recruited to the system from a set of unassigned particles, the free particles, whose motion is determined by a spanning forest algorithm. The only particle not initially a free particle is the seed which is described in Appendix~\ref{appendix:matrixcode}. Recruited particles then hold the streamed values and become fixed in a position for the matrix and vector, and later for the results. This is done by setting flags from the seed, vector, and matrix particles which point to where a new particle is needed, so a free particle will become part of the structure when one of these flags points to it. Result particles are similarly recruited by setting flags to point to where a particle may be needed based on the maximum possible values of the matrix and vector, but result particles have the option to leave the structure after multiplication has completed if they are not needed to represent the result.

\begin{figure}
	\includegraphics[width=0.3\textwidth]{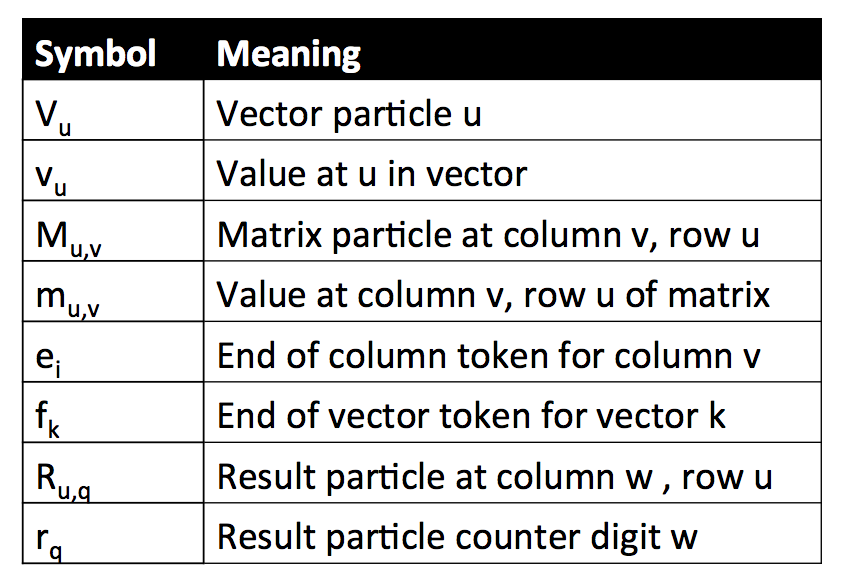}
\\	a.) \includegraphics[width=0.4\textwidth]{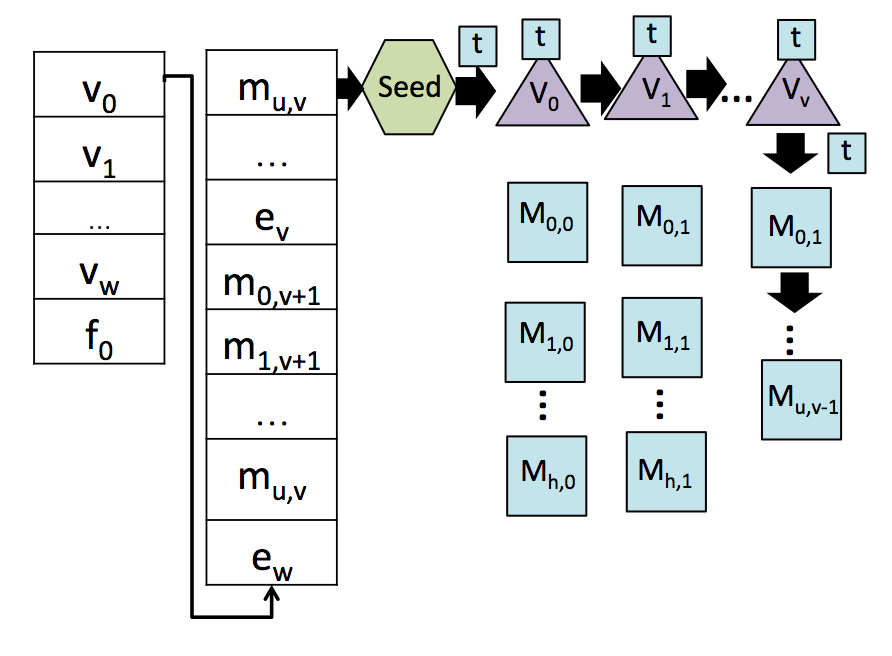}
	\\b.) \includegraphics[width=0.3\textwidth]{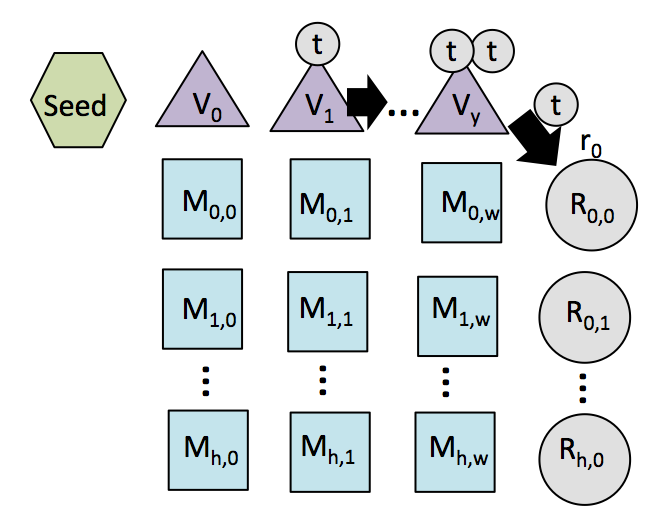}
			\caption{Diagram of system setup and notation used. Shapes labeled $V_v,M_{u,v}$, or $R_{u,q}$ are particles and small squares/circles labeled $t$ are tokens, colored and shaped based on their final destination. Free particles are not shown.}
		\label{setupfig}
\end{figure}

	Figure~\ref{setupfig}a. conceptually shows a system in the process of executing the setup algorithm. At the depicted point in time, each of the matrix values $m_{0,0},m_{1,0}...m_{h,0},m_{0,1},....m_{u-1,v}$ has been streamed into the system through the seed, and assigned to a corresponding particle by passing matrix tokens (squares labeled $t$) into the structure. For example, the value $m_{0,0}$ is assigned to particle $M_{0,0}$ at the upper left corner of the matrix. Note that any notions of ``up/down'' and ``left/right'' are relative to the orientation passed to the system from the seed particle, and do not assume any absolute orientation of the system. The next value to be added to the matrix, $m_{u,v}$ is shown at the head of the stream of values entering the seed particle. It will then be passed across the row of vector particles ($V_0,...V_v$) to the furthest particle, $V_v$, that has been recruited so far. That vector particle, $V_v$, will then pass $m_{u,v}$ down the column $M_{0,v},...M_{u-1,v}$. The particle $M_{u-1,v}$ is then responsible for recruiting a new matrix particle from the set of free particles (not shown) to be $M_{u,v}$ and hold the value $m_{u,v}$. This process will continue until the last column is completed.

\begin{figure}
	\centerline{\includegraphics[width=0.5\textwidth]{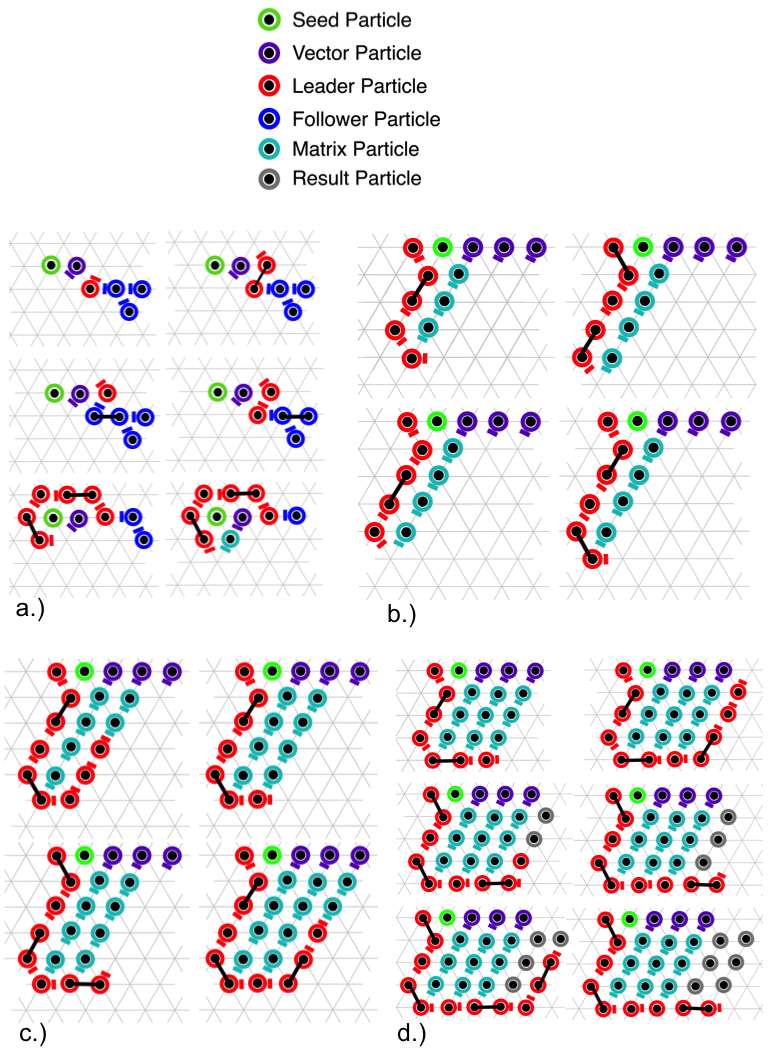}}
			\caption{a.) Setup of first matrix particle $M_{0,0}$ in worst case. b.) Setup of second matrix particle column. c.) Setup of third matrix particle column, generalizes to all subsequent columns. d.) Setup of result particle columns. (Lemma~\ref{linearcolfill})}
		\label{setupcol}
\end{figure}

The last part of the value stream, shown in the left half of the stream entering the seed in Figure~\ref{setupfig}a., is the set of vector values. Vector values are assigned to the first vector particle they reach which does not yet have a value. As each vector value is assigned, a result counter token is generated and passed down the vector away from the seed.  In Figure~\ref{setupfig}b. these are the circular tokens which are passed from $V_{w-1}$ to $R_{0,0}$ such that $R_{0,0},R_{1,0},...$ acts as a counter. When the farthest vector particle receives or generates result counter tokens it begins to recruit particles to start forming the result segment of the structure.  When using multiple matrix-vector multiplications to perform a matrix-matrix multiplication, the existing result particles at the end of each matrix-vector product stop performing operations other than passing tokens. Then new sets of result particles are recruited for each matrix-vector multiplication in the sequence. Note that all phases of the algorithm are running concurrently, and there is no synchronization between phases. In order to prove the correctness and runtime of our algorithm, we will show that the different phases of our algorithm eventually correctly terminate in order.

Once the first end of vector marker, $f_0$ is received by the seed, setup will be completed. Figure~\ref{setupcol} shows the setup process, including the initial movement of free particles, on the actual amoebot model. When setup is finished, the multiplication can be executed. The matrix-vector multiplication can be summarized by the following steps:
\begin{enumerate}
\item each vector particle $V_u$ passes its value $v_u$ in a token to matrix particle $M_{u,0}$, i.e. directly downward,
\item each matrix particle $M_{u,v}$ with value $m_{u,v}$ computes the product $m_{u,v}\cdot v_{u}$,
\item $M_{u,v}$ passes the vector value $v_u$ to $M_{u,v+1}$ (if $M_{u,v+1}$ exists) so the vector value continues to move down the column,
\item $M_{u,v}$ passes a total of $m_{u,v}\cdot v_{u}$ result counter tokens to $M_{u,v+1}$ (or $R_{u,0}$ if $M_{u,v+1}$ does not exist), i.e. to the right across the row, and
\item each result particle $R_{u,v}$ accepts result counter tokens until at capacity, and then clears its counter and passes a carry over token to $R_{u+1,v}$ (executing Algorithm~\ref{binarycounter} relative to its row of result particles).
\end{enumerate}
 Once multiplication has completed, the excess particles recruited to be result particles can be released back to being free, so that the final system configuration is minimal. Detailed pseudocode descriptions of the algorithms can be found in Appendix~\ref{appendix:matrixcode} and a proof of the following theorem appears in Appendix~\ref{appendix:matrixmult}. Snapshots of an implementation run of our algorithm appear in Figure~\ref{setupcol}.


\begin{theorem}
\label{streamingcorrect}
The streaming setup model successfully completes with the values in the correct locations as long as the system contains enough particles.
\end{theorem}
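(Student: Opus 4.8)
The plan is to prove correctness by establishing, phase by phase, that each class of structure particle receives the right values at the right positions, and that the recruitment flags guarantee the structure grows to the required size $n' = hw + w + hw'$ whenever the free-particle supply is adequate. The overall strategy is an induction that follows the streaming order of the input: matrix columns top-to-bottom and left-to-right, then the vector entries, with result counter tokens generated as a side effect. Since all phases run concurrently without synchronization, the key conceptual move is to argue that the \emph{ordering} of values within each token path is preserved, so that even though timing is asynchronous, the relative sequence in which tokens arrive at a given particle matches the sequence in which they were injected by the seed. I would state this as the main invariant: along any token path (Definition~\ref{tokenpathdef}), tokens maintain FIFO order, because a particle holds at most a constant number of tokens and passes them only forward along the predetermined direction.

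First I would handle the matrix setup. I would argue by induction on the index of the streamed matrix value $m_{u,v}$ that when it is injected, it travels right across the already-recruited vector particles $V_0,\dots,V_v$ to the furthest recruited one, then down the partial column $M_{0,v},\dots,M_{u-1,v}$, and that the particle $M_{u-1,v}$ at the bottom of the column sets a recruitment flag so a free particle becomes $M_{u,v}$ and absorbs the value. The base case is $M_{0,0}$, recruited directly by the seed (the worst-case movement is depicted in Figure~\ref{setupcol}a). The inductive step must verify two things: that the correct vector particle $V_v$ is the terminal one on the row path at the moment $m_{u,v}$ arrives (which follows from the fact that a new vector particle is recruited only after an end-of-column marker, so the number of available vector particles equals the number of completed columns), and that the column fills strictly downward so $M_{u,v}$ lands in position $(u,v)$. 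The FIFO invariant is what guarantees that the $u$-th value of column $v$ is consumed by the $u$-th matrix particle rather than by some later one.

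Next I would handle the vector and result phases. The vector values are assigned to the first value-less vector particle they reach; I would argue that because vector particles were recruited left-to-right during matrix setup and each takes exactly one value, the $v$-th vector value lands on $V_v$. For the result region, I would show that each vector value, upon assignment, generates one result counter token passed down toward $R_{0,0}$, and that the bottom vector particle recruits result columns so that $w' = \lceil \log_c(m^2 w)\rceil$ columns are built—enough to store the maximum possible token count $m^2 w$ in base $c$ (invoking the binary counter correctness, Theorem~\ref{counter:asyncOn}, generalized to radix $c$). Finally, receipt of the first end-of-vector marker $f_0$ by the seed certifies that setup is complete, and I would note that the ``as long as the system contains enough particles'' hypothesis is exactly the condition that at every recruitment flag there is a reachable free particle, so no recruitment stalls.

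\emph{Main obstacle.} The hard part will be rigorously justifying correctness under full concurrency: because there is no synchronization between phases, I must rule out interleavings in which, say, a matrix value overtakes another on a shared row path, or a result column is still being recruited when counter tokens arrive. The crux is therefore the FIFO/ordering invariant along token paths together with a proof that recruitment flags are set and cleared monotonically, so that positions are assigned in exactly the streamed order regardless of the activation schedule. I expect everything else (the geometry of downward/rightward token travel and the counting of $w'$ columns) to be routine once that invariant is in hand.
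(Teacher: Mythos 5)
Your proposal takes essentially the same route as the paper's proof: both proceed by induction on the streamed values, with the base case $m_{0,0}$ recruited via the seed's flag and the inductive step split into the same two cases (first value of a new column versus a value continuing an existing column), relying on free particles eventually reaching each flagged position. Your explicit FIFO invariant and your treatment of the vector/result phases go slightly beyond the paper's written proof, which only argues placement of matrix values and leaves the ordering assumption implicit, but the underlying argument is the same.
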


\subsection{Runtime Analysis}\label{sec:analysismatrixmult}

Similarly to the binary counter case, to show bounds on the runtime of the matrix multiplication system, we show bounds for a parallel schedule (defined in Section~\ref{syncdef}) and that such a parallel schedule is dominated by the asynchronous schedule. For comparisons of progress in a system, we look at how close particles and tokens are to their final position nodes of the graph. We give a high-level sketch of the proof here; please see Appendix~\ref{appendix:matrixmult} for the full proof.

For matrix value tokens, final position is the particle in the factor matrix corresponding to the value. For vector value tokens final position is the bottom matrix particle in the column under the vector particle corresponding to their value. For product tokens final position is in the counter representing the value of the result vector corresponding to the matrix row in which the product token originated.
For the matrix multiplication problem, define the operator $\preceq_M$ between configurations to roughly mean that $C \preceq_M C'$ if and only if tokens and particles are further along their paths towards their final destination in C' than in C. Using this comparison operator, we show:
\begin{lemma}\label{setupdomination}
For any asynchronous particle activation sequence $A$, there exists a parallel schedule ${\cal P}$ such that the number of asynchronous rounds needed by the matrix-vector multiplication algorithm according to $A$ is at most equal to the number of parallel rounds required by the algorithm following ${\cal P}$.
\end{lemma}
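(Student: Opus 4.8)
\emph{Approach.} The plan is to prove this by the same round-by-round domination (coupling) strategy used for the counter in Lemma~\ref{asyncdominates}, now carried out with respect to the order $\preceq_M$. I would first make $\preceq_M$ precise by attaching to every token and every recruited particle a nonnegative integer potential equal to the number of hops still separating it from its final node (matrix, vector, and product tokens and the structure particles each having a well-defined target, as listed just before the statement), and declaring $C \preceq_M C'$ exactly when every such potential is no larger in $C'$ than in $C$. Fix an arbitrary asynchronous activation sequence $A$ and let $D_0, D_1, \dots$ be the configurations after each successive asynchronous round of $A$ (Definition~\ref{asyncdef}). I will build a parallel schedule $\mathcal{P} = (C_0, C_1, \dots)$ with $C_0 = D_0$ (Definition~\ref{syncdef}) and maintain the invariant $C_k \preceq_M D_k$ for all $k \ge 0$; since this says the asynchronous run is never behind the parallel one, once $\mathcal{P}$ reaches the completed configuration in $T_{\mathcal{P}}$ parallel rounds the asynchronous run has completed by round $T_{\mathcal{P}}$ as well, giving exactly the claimed inequality.

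\emph{Construction and inductive step.} Assume $C_k \preceq_M D_k$. The first ingredient is a monotonicity claim: no activation of the algorithm ever increases any potential, so both schedules move tokens and particles only toward their targets; in particular $D_k \preceq_M D_{k+1}$ and hence $C_k \preceq_M D_{k+1}$. I would then define the parallel round $C_k \to C_{k+1}$ greedily: advance each token (and each recruited particle, via a handover) by exactly one hop toward its target whenever it is strictly behind its position in $D_{k+1}$ and the move is admissible, leaving everything else fixed. Because a single parallel round advances any one token by at most one hop along a token path (condition~3 of Definition~\ref{parallelround}), whereas an asynchronous round may, under a favorable activation order, push a token several hops within the same round, these greedy one-hop advances never overshoot $D_{k+1}$, which yields $C_{k+1} \preceq_M D_{k+1}$. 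It then remains to check that this choice is a legal parallel round: the capacity conditions required by the token-path and handover clauses of Definition~\ref{parallelround} hold because the receiving node already holds the token (or has already completed the corresponding handover) in $D_{k+1}$, so it has the needed room under the invariant.

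\emph{Main obstacle.} The delicate part is not the one-hop bookkeeping but establishing the monotonicity claim and the admissibility of the coupled moves uniformly across the concurrently running phases and the several token types. Setup recruitment, downward propagation of vector values, rightward propagation of products, and the per-row binary counters at the result particles all proceed simultaneously with no synchronization, so I must argue that none of them ever sends a token or particle backward and that a single greedy parallel round can respect every local capacity at once. The counter interaction at the result particles is the sharpest point: product tokens are consumed, discarded, and re-emitted as carries by Algorithm~\ref{binarycounter}, so ``distance to final position'' there must be defined in terms of accumulated count and carries rather than raw hop count, and I would invoke Lemma~\ref{asyncdominates} as a black box for that sub-structure to keep the coupling consistent. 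Coordinating a particle handover (clause~4 of Definition~\ref{parallelround}) with the simultaneous passing of the token that the recruited particle will carry, so that particle and token advance compatibly, is the other place where the admissibility check requires the most care.
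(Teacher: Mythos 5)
Your high-level strategy---a round-by-round induction maintaining $C_k \preceq_M D_k$ between a constructed parallel schedule and the given asynchronous one---is the same as the paper's, and flagging the result-row counters as a place to reuse Lemma~\ref{asyncdominates} is consistent with how the pieces fit together. However, there is a genuine gap in how you treat particle motion. You fold recruited particles into the same hop-count potential as tokens, but during setup most of the moving agents are \emph{free} particles executing the spanning forest primitive: leaders circulating the perimeter and followers performing handovers, with no assigned target node (some are never recruited at all), so ``hops remaining to the final node'' is undefined for them and their moves are not single hops toward a target. The paper does not handle this with a potential; it observes that the particle motion constitutes a \emph{greedy forest schedule} and imports Lemma~3 of~\cite{DBLP:conf/dna/DerakhshandehGP16} to obtain domination of particle positions, then runs the token induction on top of that. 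Your proposal needs that ingredient, or an equivalent argument, and currently has none.

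A second problem is the direction of your coupling. You define each parallel round to advance a token only when it is strictly behind its position in $D_{k+1}$, so $C_{k+1} \preceq_M D_{k+1}$ holds essentially by construction---but the parallel schedule you exhibit then inherits every stall of the asynchronous run, and is not the greedy parallel schedule whose round count is actually bounded in Theorem~\ref{matrixOnp} and Lemma~\ref{matrix:syncOn}. The lemma is only useful downstream if the exhibited schedule is one satisfying those bounds. The paper instead fixes the greedy parallel schedule and proves the asynchronous run keeps pace with it; the mechanism you are missing for that direction is the asymmetric token capacity (two in the asynchronous model versus one in the parallel model), which is precisely what guarantees that every token can advance at least one hop in every asynchronous round regardless of activation order, even when the downstream particle is activated before it has forwarded its own token. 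Without either that mechanism or your weaker lagging construction being tied back to the greedy schedule, the domination argument does not transfer the parallel runtime bounds to the asynchronous execution.
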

In this section we first consider the setup phase of the matrix-vector multiplication, in which particles move from initial spanning tree configurations to the structure configuration consisting of $matrix$, $vector$, and $result$ particles as previously described. Setup also includes the passing of tokens corresponding to matrix and vector values, but not the passing of tokens corresponding to products of these values. To show that system setup completes in $O(n')$ parallel rounds, we first show that our modified spanning tree primitive supplies particles to construction as necessary, so that we have:
\begin{lemma}
\label{columnlinearlemma}
Each matrix and result particle column takes $O(h)$ rounds to fill with particles in the parallel execution.
\label{linearcolfill}
\end{lemma}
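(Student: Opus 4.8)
The plan is to bound the time to fill a single column by $O(h)$ and then note that the identical argument applies to every matrix column and to every result column (whose height is at most $h+1 = O(h)$). I would split the analysis into two concerns that can be treated independently: (i) the spanning-forest primitive must deliver free particles to the growing tip of a column fast enough that recruitment never stalls for want of an available particle, and (ii) the value tokens that drive recruitment of matrix columns (respectively, the result-counter tokens that drive recruitment of result columns) must descend the column in pipelined fashion, so that the tip advances by one position every $O(1)$ parallel rounds once the pipeline is primed.

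The pipelining in (ii) rests on the third token-movement clause of Definition~\ref{parallelround}: in a single parallel round, every token lying on a token path whose next particle has spare capacity advances one position at once. For column $v$, the values $m_{0,v},\dots,m_{h-1,v}$ enter at $V_v$ in order, and value $m_{u,v}$ must descend to the current bottom particle $M_{u-1,v}$ before that particle recruits $M_{u,v}$. Since a fresh value may enter the top of the column while earlier values are still descending, the descents overlap, and the column tip advances at a constant amortized rate rather than the rate that would force each value to traverse the whole column in isolation (which would give $O(h^2)$). I would formalize this by taking the depth of the current tip as a progress measure and showing it increases by a constant on average per parallel round, so depth $h$ is reached within $O(h)$ rounds.

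For the supply concern (i) I would invoke the guarantees of the spanning-forest algorithm of \cite{DBLP:conf/dna/DerakhshandehGS15} (reproduced in Appendix~\ref{spanningforestappendix}): free particles move continually toward and along the structure, so a recruitment flag set at a column tip is occupied by a free particle within $O(1)$ parallel rounds. The only position needing separate care is the first particle of a column, whose recruitment may require routing a free particle around the existing structure, the worst case shown in Figure~\ref{setupcol}a; I would bound this startup cost by $O(h)$ so that it is absorbed into the overall column bound, and the result columns of Figure~\ref{setupcol}d are handled in the same way. I expect the main obstacle to be exactly concern (i): one must verify that within a single parallel round the handover moves (the fourth particle clause of Definition~\ref{parallelround}) and the token passing (its third token clause) can proceed compatibly, so that the free-particle stream keeps pace with recruitment demand uniformly in the depth of the column; a slowdown that grew with depth, rather than a constant-factor loss, would break the $O(h)$ bound.
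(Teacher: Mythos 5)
Your plan reaches the right bound, but it is organized quite differently from the paper's proof, and half of it is doing work the paper deliberately avoids. The paper's proof of Lemma~\ref{linearcolfill} opens by using the $prestop$ state to make column filling \emph{completely independent} of token passing: free particles stop at positions that will eventually belong to the structure regardless of whether the matrix/vector value tokens have caught up, so the lemma reduces to a pure particle-motion analysis. Your concern~(ii) --- the pipelined descent of value tokens driving recruitment of the column tip --- is therefore not part of this lemma at all in the paper; that pipelining argument instead appears in the proof of Theorem~\ref{matrixOnp}, where it gives the $O(1)$-per-particle token cost once particles are already in place. What remains, your concern~(i), is exactly what the paper proves, but not by a generic ``$O(1)$ rounds to occupy a flag'' appeal to the spanning-forest primitive: it first establishes a separate lemma (Lemma~\ref{flattenlemma}) showing that in the parallel greedy forest schedule the trees flatten into a line behind the construction front with no gap at the current recruitment position, and then does an explicit induction with concrete round counts (the first particle of column $0$ travels $7$ units in $15$ parallel rounds in the worst case of Figure~\ref{setupcol}a, each subsequent particle in the column trails its predecessor by at most one unit and finishes $5$ rounds later, the first particle of column $1$ needs $2h-6$ rounds, and later columns start from the same relative position). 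Your amortized tip-depth potential argument would also work and is arguably more robust to implementation details, but as written it leaves the supply guarantee --- the statement you correctly identify as the main obstacle, that the free-particle stream keeps pace uniformly in the column depth --- as an unproven invocation; that claim is precisely the content of Lemma~\ref{flattenlemma} and the ``at most one unit behind'' induction, and without some version of it your $O(h)$ bound is not yet established.
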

Then with $w+w' = O(w)$ columns which need to be filled, we can show:
\begin{theorem}
\label{matrixOnp}
The parallel matrix system setup completes in $O(n')$ rounds.
\end{theorem}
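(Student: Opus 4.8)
The plan is to reduce the global setup time to a product of two quantities we already control: the number of columns that must be populated with particles, and the per-column filling cost supplied by Lemma~\ref{columnlinearlemma}. By the configuration notation of Section~\ref{setupsection}, the final structure consists of $w$ matrix columns (each of height $h$), a single top row of $w$ vector particles, and $w'$ result columns (also of height $h$, indexed by rows $0 \le u < h$). Thus there are exactly $w + w'$ columns to fill, and I would first record the elementary bookkeeping $(w+w')h = hw + hw'$ together with $n' = hw + w + hw' \ge hw + hw'$, so that any bound of the form $O((w+w')h)$ is automatically $O(n')$. Equivalently, since $w' = \lceil \log_c(m^2 w)\rceil = O(w)$ for constant $c$ and $m$, the number of columns is $O(w)$, and one may phrase the target as $O(wh)$; both phrasings coincide with $O(n')$.

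Next I would invoke Lemma~\ref{columnlinearlemma} to charge $O(h)$ parallel rounds to each of the $w + w'$ columns and argue that these costs compose along the streaming order. Concretely, let $T_v$ denote the round by which column $v$ is fully occupied. The stream delivers values in column-major order and the top row of vector particles grows strictly left to right, so column $v$ cannot begin to receive its values until the vector particles $V_{v-1}$ and $V_v$ have been recruited. I would prove the inductive step $T_v \le T_{v-1} + O(h)$, from which $T_{w-1} = O(wh)$ follows, and then treat the $w'$ result columns identically (they are recruited concurrently but drain at the same $O(h)$ rate per column), yielding a total of $O((w+w')h) = O(n')$ parallel rounds. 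The single top row of $w$ vector particles and the generation of the result-counter tokens contribute only $O(w) \le O(wh)$ additional rounds and are absorbed.

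The main obstacle is justifying the composition $T_v \le T_{v-1} + O(h)$ rigorously under the parallel schedule, rather than merely asserting that ``$O(w)$ columns times $O(h)$ each'' gives $O(wh)$. Two effects must be controlled. First, the top row of vector particles is a shared bottleneck through which every one of the $\Theta(hw)$ matrix-value tokens must pass; I must check that pipelining keeps the per-column delay across this row to $O(h)$ and that no unbounded congestion builds up, which is exactly where the constant token capacity $c$ and the one-step-per-round token motion of Definition~\ref{parallelround} force the pipeline to drain at rate $\Theta(1)$. Second, column $v$'s filling overlaps that of column $v-1$, so I would phrase the induction in terms of when $V_v$ is recruited and when column $v$'s last value arrives, showing each lags its predecessor by only $O(h)$, with Lemma~\ref{columnlinearlemma} supplying the intra-column drain once a column has started. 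Since overlap can only help, the fully serialized case is the worst case and the sum $(w+w')\cdot O(h)$ is a valid upper bound. This establishes the theorem for the parallel schedule; the corresponding asynchronous bound then follows separately by the dominance argument of Lemma~\ref{setupdomination} and is not needed here.
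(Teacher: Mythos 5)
Your proposal is correct and follows essentially the same route as the paper: both charge $O(h)$ rounds per column via Lemma~\ref{columnlinearlemma}, rely on pipelining through the vector row to keep token congestion from accumulating, and sum over the $O(w)$ matrix and result columns to obtain $O(n')$. The only difference is bookkeeping --- the paper additively separates the token-passing time (computed assuming particles are pre-positioned, $O(n') + O(w) + O(w)$) from the column-filling time and interleaves the two by ``pausing and resuming'' token passing, whereas you fold both contributions into a single per-column increment $T_v \le T_{v-1} + O(h)$; the ingredients and the level of rigor are the same.
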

\begin{theorem}
\label{matrixthetanp}
The streaming matrix system setup completes in $\Theta(n')$ rounds.
\end{theorem}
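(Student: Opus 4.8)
The plan is to prove the two matching bounds separately, since the claim $\Theta(n')$ amounts to an $O(n')$ upper bound and an $\Omega(n')$ lower bound on the number of rounds an actual asynchronous (streaming) execution uses.

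For the upper bound I would simply chain the two results already established for this phase. Lemma~\ref{setupdomination} guarantees that for every asynchronous activation sequence $A$ there is a parallel schedule ${\cal P}$ whose number of parallel rounds is at least the number of asynchronous rounds used by $A$, and Theorem~\ref{matrixOnp} bounds any such parallel schedule by $O(n')$ rounds (via the column-filling estimate of Lemma~\ref{linearcolfill}). Composing the two immediately gives that the streaming setup finishes within $O(n')$ asynchronous rounds, so no new argument is needed in this direction.

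For the lower bound I would use a throughput argument at the seed. The seed is the unique interface through which every matrix and vector entry enters the system, and by its code it creates at most a constant number of value-tokens per activation, hence $O(1)$ per round. Setup cannot be declared complete until every matrix-entry token has been created and delivered to its matrix particle, and there are $hw$ such entries together with $w$ vector entries; emitting all of them through a single port therefore forces $\Omega(hw+w)$ rounds. It then remains to match this against $n'$: since $w' = \lceil \log_c(m^2 w)\rceil = O(w)$ for bounded entry magnitude $m$ and fixed capacity $c$, the result-column contribution satisfies $hw' = O(hw)$, whence $n' = hw + w + hw' = \Theta(hw)$. Thus $\Omega(hw+w) = \Omega(n')$, and the streaming setup requires $\Omega(n')$ rounds.

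The hard part will not be a deep idea but the lower-bound bookkeeping: I must argue carefully that ``completion'' genuinely requires all $\Theta(hw)$ entries to have passed through the seed, so that the stopping condition is not met earlier by a partially built structure, and that the internally recruited result particles $hw'$ do not inflate the required count beyond $\Theta(hw)$ (this is exactly where the estimate $w' = O(w)$ is invoked). Once both directions are in place, combining the $O(n')$ upper bound with the $\Omega(n')$ lower bound yields the claimed $\Theta(n')$.
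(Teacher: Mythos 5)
Your proposal matches the paper's own argument: the upper bound is obtained by composing Lemma~\ref{setupdomination} with Theorem~\ref{matrixOnp}, and the lower bound is the same seed-throughput argument (each of the $hw$ matrix entries must pass through the seed at one token per round, and $n' = \Theta(hw)$ since $w' = O(\log w)$ keeps the result columns from dominating). Your write-up is, if anything, slightly more careful than the paper's, which leaves the $n' = \Theta(hw)$ bookkeeping implicit.
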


We next consider the actual matrix-vector multiplication process. Multiplication is initiated by each vector particle sending a token representing the value corresponding to its position down the column of the matrix, such that it is seen by the matrix particle at each position which directly multiplies with that vector value. The amount of computation for the multiplication step is bounded by the time for tokens to travel down matrix columns and across matrix and result rows, so we have:
 \begin{lemma}\label{matrix:syncOn}
The parallel matrix-vector multiplier completes in $O(h+w)$ rounds.
\end{lemma}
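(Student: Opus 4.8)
The plan is to work entirely within a single parallel schedule (the asynchronous bound then follows from the domination result, Lemma~\ref{setupdomination}), and to decompose the multiplication into three pipelined sub-processes whose latencies add up to $O(h+w)$: (i) the downward descent of each vector value along its matrix column, (ii) the rightward flow of the generated result-counter tokens across each matrix row into the result counters, and (iii) the counting performed by each row's result counter. The central quantitative fact I would exploit throughout is that a single matrix or vector entry is bounded by the constant $m$, so every individual product $m_{u,v}\cdot v_u$ is $O(1)$ and the total number of result-counter tokens produced in any one row is at most $\sum_v m_{u,v} v_u \le m^2 w = O(w)$.

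First I would bound the descent/generation phase. By the token-movement clause of Definition~\ref{parallelround}, a token on a token path advances exactly one node per parallel round, so a vector value injected at the top of a column reaches the matrix particle in row $u$ after $O(u) \le O(h)$ rounds, and all columns descend simultaneously. Hence by round $O(h)$ every vector value has visited every matrix particle in its column, every product $m_{u,v}\cdot v_u$ has been computed, and every result-counter token has been generated. Since each particle emits only $O(1)$ such tokens and has constant capacity $c$, the emission at any particle costs only $O(1)$ additional rounds and does not change the $O(h)$ bound.

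Next I would bound the rightward flow and the counting, which run concurrently. In a fixed row $u$, the $O(w)$ result-counter tokens must traverse a token path of length $w + w' = O(w)$ (recall $w' = \lceil \log_c(m^2 w)\rceil = O(\log w)$) to reach the result counter, and then be absorbed by it. For the flow I would invoke a standard pipelining argument: under Definition~\ref{parallelround} all tokens on a token path shift forward together, so $O(w)$ tokens injected into a path of length $O(w)$ with constant per-particle capacity all arrive at the far end within $O(w)$ rounds (the $O(\text{length}) + O(\text{count})$ bound). For the counting I would apply Lemma~\ref{counter:syncOn} to the result-particle column of row $u$: it must count to $b_u \le m^2 w = O(w)$, which takes $O(b_u) = O(w)$ parallel rounds, and this counting proceeds concurrently as tokens arrive. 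Adding the $O(h)$ descent latency to the $O(w)$ row latency, and observing that the $h$ rows are mutually independent and execute in parallel, yields the claimed $O(h+w)$ bound.

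The main obstacle is making the concurrency rigorous rather than hand-waving the word \emph{pipelined}. Two sources of contention must be controlled simultaneously at each matrix particle $M_{u,v}$, which within one constant-capacity buffer must forward the descending vector value, emit its own $O(1)$ result-counter tokens, and relay the result-counter tokens streaming in from $M_{u,v-1}$. I would argue that because each of these loads is $O(1)$ per particle per round and the capacity $c$ is constant, the two orthogonal token flows never inflate the per-node delay beyond a constant factor. The subtler point is that a row generates up to $m^2 w$ tokens but can hold only $O(w)$ of them in transit at once, so they necessarily queue and are drained by the counter at rate one increment per round; I would show that this drain, together with the fact that all tokens of row $u$ are generated within an $O(1)$-round window around round $O(u)$, keeps the last token counted by round $O(h)+O(w)$, so that neither the flow nor the counter becomes a super-linear bottleneck.
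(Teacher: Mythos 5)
Your proof is correct and follows essentially the same route as the paper's: both decompose the multiplication into the $O(h)$ descent of the vector values down their columns, the pipelined rightward flow of the at most $m^2w=O(w)$ product tokens per row, and the $O(\log w)$-wide result counters, summing to $O(h+w)$. The only difference is one of care: you make explicit the queuing and serialization argument at each row's result counter (via Lemma~\ref{counter:syncOn} and an $O(\text{length}+\text{count})$ pipelining bound), which the paper's proof leaves implicit and only surfaces in the matching lower bound of Theorem~\ref{sumthetahw}.
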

\begin{theorem}
\label{sumthetahw}
The asynchronous matrix-vector multiplier completes calculations in $\Theta(h+w)$ rounds.
\end{theorem}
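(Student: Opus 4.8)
The plan is to split the claimed $\Theta(h+w)$ bound into a matching upper and lower bound, and to obtain each by reducing the asynchronous execution to a parallel schedule, exactly as was done for the binary counter in Theorem~\ref{counter:asyncOn}. The upper bound is the routine direction and I would dispatch it first; the lower bound is where the real work lies, so I expect it to be the main obstacle.

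For the upper bound I would simply combine Lemma~\ref{matrix:syncOn} with the domination Lemma~\ref{setupdomination}. Lemma~\ref{matrix:syncOn} already certifies that every parallel schedule of the multiplier finishes within $O(h+w)$ parallel rounds, and Lemma~\ref{setupdomination} guarantees that for any asynchronous activation sequence $A$ there is a parallel schedule ${\cal P}$ whose parallel-round count is at least the asynchronous-round count of $A$. Chaining these two facts yields that the asynchronous multiplier completes in $O(h+w)$ rounds, with no further computation required.

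For the lower bound I would exhibit an explicit dependency chain of length $\Omega(h+w)$ that no schedule can shortcut. Concretely, on at least one instance (e.g.\ the all-maximum matrix and vector) the product contributed by the matrix particle at the bottom of the column farthest from the result structure cannot even be generated until the relevant vector value has been carried the full length of a column, a distance $\Omega(h)$, and the resulting product token must then be relayed the full length of a row into the result counter, a further distance $\Omega(w)$. Since these are causally chained (the product cannot begin crossing the row until the vector value has finished descending the column), and since Definition~\ref{parallelround} allows each token to advance at most one position per parallel round, this chain forces $\Omega(h+w)$ parallel rounds; together with Lemma~\ref{matrix:syncOn} the parallel multiplier is thus $\Theta(h+w)$. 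I would then transfer this to the asynchronous setting by observing that any parallel schedule can be realized as an asynchronous execution in which one parallel round is played out as a single asynchronous round (a sequence of one activation per particle), so the parallel lower bound is witnessed by an asynchronous execution and the worst-case asynchronous cost is $\Omega(h+w)$ as well.

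The hard part is precisely this last transfer. The asynchronous round of Definition~\ref{asyncdef} only requires each particle to be activated \emph{at least} once, so a favorable activation order could in principle cascade a single token across many particles within one round and thereby collapse the graph-distance argument that drives the lower bound. The delicate step is therefore to anchor the $\Omega(h+w)$ estimate to the parallel schedule and to the tight parallel/asynchronous correspondence underlying Lemma~\ref{setupdomination}, rather than applying a naive ``one hop per round'' argument directly to asynchronous rounds. I would also want to double-check that the index conventions used in the five multiplication steps pin down the intended column-then-row critical path, since the layout couples the two matrix dimensions and the down-column versus across-row roles must be assigned consistently for the $\Omega(h)+\Omega(w)$ split to be valid.
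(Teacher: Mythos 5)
Your proposal is correct, and the upper-bound half is exactly the paper's argument (Lemma~\ref{matrix:syncOn} for the parallel bound plus the domination Lemma~\ref{setupdomination}); the lower bound, however, takes a genuinely different route. The paper does \emph{not} use a critical-path argument: it explicitly concedes the cascading phenomenon you worry about (``it is possible that vector tokens require only one round to go from their start to finish locations'') and instead proves $\Omega(h+w)$ by a congestion argument that holds for \emph{every} asynchronous execution --- the $w$ product tokens of a single matrix row must all be serialized through the first result particle of that row, which can absorb only one token per round. You instead exhibit a causal dependency chain (vector value descends $h$ hops before the bottom product can be created, which then crosses $w$ hops) and witness it with a deliberately slow asynchronous schedule in which each parallel round is realized by one activation per particle. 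Each approach buys something the other does not: the paper's bottleneck argument is universal over schedules and sidesteps the need to construct a slow witness, but as written it only visibly accounts for the $\Omega(w)$ term, whereas your chained $\Omega(h)+\Omega(w)$ argument cleanly delivers both terms; conversely, your bound is only a worst-case (existential over schedules) lower bound and does require you to carry out the parallel-to-asynchronous realization carefully --- activating particles against the direction of token flow so that no token advances more than one hop per round --- which is a real construction the domination lemma does not give you for free, though it is routine. Your instinct that the naive ``one hop per asynchronous round'' claim is the weak point is exactly right, and it is precisely the reason the paper's authors reached for congestion rather than distance.
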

\begin{theorem}
\label{sumthetahwy}
The asynchronous matrix-matrix multiplier completes calculations in $\Theta(y(h+w))$ rounds.
\end{theorem}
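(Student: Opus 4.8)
The plan is to treat the matrix-matrix product $AC$ exactly as the algorithm does: as a sequence of $y$ matrix-vector products $A\vec{c}_0, A\vec{c}_1, \ldots, A\vec{c}_{y-1}$ run one after another over the same (already set up) matrix structure, and to bound the cost of the whole sequence by invoking the already-established per-product bound of Theorem~\ref{sumthetahw}. Since each $A\vec{c}_i$ is itself an asynchronous matrix-vector multiplication, I would first argue that the asynchronous execution of the matrix-matrix algorithm decomposes into $y$ consecutive, non-overlapping segments, where segment $i$ begins when the seed starts streaming $\vec{c}_i$ (triggered by the finished notification from segment $i-1$) and ends when $\vec{b}_i$ has been stored. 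As with the earlier theorems, I would establish the bound on a parallel schedule and then appeal to the dominance of parallel schedules over asynchronous executions (Lemma~\ref{setupdomination}); concretely, it suffices to prove both an $O(y(h+w))$ upper bound and an $\Omega(y(h+w))$ lower bound.

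For the upper bound, I would show that each segment costs $O(h+w)$ parallel rounds, so that the $y$ segments total $O(y(h+w))$. Within segment $i$ there are four activities: streaming the $w$ entries of $\vec{c}_i$ across the vector row ($O(w)$ rounds), the multiplication proper ($O(h+w)$ rounds by Lemma~\ref{matrix:syncOn}), recruiting and building the fresh result segment that will hold $\vec{b}_i$, and propagating the finished notification back to the seed ($O(h+w)$ rounds, the diameter of the structure). The only delicate term is the result-segment construction: a result block consists of $w' = \lceil \log_c(m^2 w)\rceil$ counter columns each of height $O(h)$, so I must verify that these columns fill in a pipelined fashion in $O(h + w') = O(h + w)$ rounds (applying Lemma~\ref{linearcolfill} per column together with the observation that recruitment of successive columns overlaps), rather than serially in $O(h w')$ rounds. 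Granting this, every activity in a segment is $O(h+w)$, a constant number of them occur, and since all phases run concurrently without synchronization, the segment length is $O(h+w)$.

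For the lower bound, I would use that the segments are genuinely sequential: segment $i+1$ cannot begin streaming $\vec{c}_{i+1}$ until the finished notification of segment $i$ has returned to the seed, so the segment durations add. Each segment is a matrix-vector multiplication, which requires $\Omega(h+w)$ rounds by the matching lower bound embedded in Theorem~\ref{sumthetahw} (a vector value must travel $\Omega(h)$ steps down a column and a product token must travel $\Omega(w)$ steps across a row). Summing the $y$ non-overlapping segments gives $\Omega(y(h+w))$, and combining with the upper bound yields $\Theta(y(h+w))$.

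The main obstacle I anticipate is the hand-off between consecutive products rather than the products themselves, which are already covered by Theorem~\ref{sumthetahw}. Specifically, I must confirm that (i) the transition overhead, namely streaming the next column, building a new result segment, and routing the finished notification, is bounded by $O(h+w)$ per product, and in particular that the $w' = O(\log w)$ result columns are constructed in parallel rather than serially; and (ii) the configuration left at the end of one segment is clean enough (the previous result particles are frozen to mere token-passers, and the vector row is ready to accept new values) that Theorem~\ref{sumthetahw} applies verbatim to the next segment. Handling the concurrency-without-synchronization carefully, so that residual token traffic from segment $i$ does not inflate segment $i+1$ beyond $O(h+w)$, is the crux of the argument.
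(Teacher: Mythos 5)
Your proposal follows essentially the same route as the paper: decompose the matrix-matrix product into $y$ sequential matrix-vector products, charge each $\Theta(h+w)$ via Theorem~\ref{sumthetahw} plus $\Theta(w)$ for streaming the next column into the vector row, and sum over the $y$ non-overlapping segments. The transition overheads you single out as the crux (recruiting the fresh result block for each $\vec{b}_i$ in $O(h+w)$ rather than $O(hw')$ rounds, and routing the finished notification back to the seed) are not addressed at all in the paper's own proof, which accounts only for the $\Theta(w)$ per-segment vector streaming, so your treatment is, if anything, more explicit on exactly the point the paper glosses over.
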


\section{Image Processing Applications}\label{sec:edgedetect}
Both the setup and multiplication steps of the matrix-vector multiplication algorithm can be used in image processing applications. Individual particles can be assigned to store individual pixels or small grids of pixels of an image, and their proximity to particles holding the corresponding adjacent pixels makes a number of localized image processing algorithms highly efficient. In this section we first discuss using the amoebot model to execute the Canny edge detection algorithm on a single channel image, meaning with a single scalar value for each pixel. Pixel values are streamed into the system and a grid is established in the same way as in matrix-vector multiplication setup (Section~\ref{setupsection}), but without the requirement of result particles. Image pixel values are streamed as matrix values, so that the matrix particles store the image and independently start to execute the algorithm as soon as they receive a value. We next discuss how to use the amoebot model to execute image color transformations that use matrix-matrix multiplications. In this application, the image is represented as a matrix with entries corresponding to each color component of each pixel. A series of matrix-vector multiplications is then executed to complete color transformation.
\subsection{Canny Edge Detection}
Canny edge detection uses the Sobel operator, after initially applying Gaussian filtering, and then performs non-maximum suppression and hysteresis thresholding. The Sobel operator applies the matrix convolution operation to two $3\times 3$ kernels with the original image to approximate horizontal and vertical derivatives. The resulting values at each point are then combined and used to calculate the direction of the gradient, as described in Algorithm~\ref{pixelcanny}. This means each particle $P_j$ compares its own gradient magnitude $P_j.G$ with those of its two neighbors in the positive and negative directions determined by the angle of the gradient, $P_j.\theta$. If $P_j$ is not the local maximum on the gradient it not considered to potentially be part of an edge. Thresholds are then used to select particles with gradient values in a particular range. These particles, plus any local maxima particles in the neighborhood of those in the range, are considered to be the edge set. 
 \begin{algorithm}
\caption{Pixel Particle P}\label{pixelcanny}
\begin{algorithmic}[H]
\State $P.pixelValue$ = pixel value as read from the image
\State Label grid values around particle $P$ as $\begin{pmatrix} g_2 & g_1 & g_0\\g_3 & P & g_7 \\ g_4 & g_5 & g_6\end{pmatrix}$
\State $P.gridvalues[8] = \{g_0,g_1,...g_7\}$ are pixel values of the $8$ surrounding particles on a grid
\State $P.G$= gradient magnitude
\State $P.\theta$ = gradient direction
\State $P.edgeCertainty \in \{Sure,SureNeighbor,No\}$
\Procedure{Sobel}{}
\State Gather $P.gridvalues$ directly or by querying neighbors
\State  $G_x = K_x *P.gridvalues$
\State  $G_y = K_y*P.gridvalues$
\State $P.G = \sqrt{G_x^2+G_y^2}$
\State $\displaystyle P.\theta = \text{atan}( \frac{G_x}{G_y})$
\EndProcedure
\Procedure{CannyEdgeDetect}{}
\State Gather $P.gridvalues$ directly or by querying neighbors
\State  $P.pixelValue = K_G * P.gridvalues$
\State $P.noiseFinished =true$
\State Update $P.gridvalues$ after each has $P.noiseFinished = true$
\State Execute procedure \textsc{Sobel}
\State Switch on angle, compare to appropriate neighbors to determine if maximum
\If{$minThreshold<P.G<maxThreshold$}
\State $P.certainty = Sure$
\ElsIf{$P$ has neighbor $N$ with $N.certainty = Sure$}
\State $P.certainty = SureNeighbor$
\Else 
\State $P.certainty = No$
\EndIf
\EndProcedure
\end{algorithmic}
\end{algorithm}

We setup the image as a matrix using the algorithms defined in Section~\ref{setupsection}. Each step of edge detection requires particles to have information from other particles but only within a constant distance, so we have:
\begin{theorem}
\label{edgeconst}
Edge detection will complete in constant time after image setup.
\end{theorem}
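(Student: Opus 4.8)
The plan is to decompose the Canny procedure of Algorithm~\ref{pixelcanny} into a constant number of phases and argue that each phase completes in $O(1)$ rounds because every particle's computation in a given phase depends only on data residing within a constant-radius neighborhood. The phases are: (i) Gaussian noise reduction, $P.pixelValue = K_G * P.gridvalues$; (ii) the Sobel step computing $G_x, G_y, P.G, P.\theta$; (iii) non-maximum suppression, in which $P$ compares $P.G$ against the two neighbors selected by $P.\theta$; and (iv) hysteresis thresholding, which sets $P.certainty$ from $P.G$ and the certainty of its immediate neighbors. Since this is a fixed number of phases, it suffices to bound each one by $O(1)$ rounds.

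First I would observe that the Gaussian kernel $K_G$ and the Sobel kernels $K_x$, $K_y$ are $3\times 3$, so $P.gridvalues$ consists of the eight grid-neighbors $g_0,\dots,g_7$ of $P$, each lying at bounded hop-distance from $P$ in the triangular grid under the embedding produced by the setup of Section~\ref{setupsection}. Hence in phase (i) each particle gathers $P.gridvalues$ by querying neighbors within a constant number of rounds and then computes its filtered value; the same constant-radius gather suffices for phase (ii), while phases (iii) and (iv) read only a constant number of neighboring state variables ($P.G$, $P.\theta$, $P.certainty$). Thus within any single phase no particle needs information from more than a constant distance, and a constant number of asynchronous rounds lets every particle read the relevant neighboring memory.

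The only coupling across phases is the $P.noiseFinished$ flag and the analogous phase-complete conditions: a particle may begin phase (ii) only after its constant-distance neighbors finish phase (i) (the line ``Update $P.gridvalues$ after each has $P.noiseFinished = true$''), and similarly at later boundaries. The key point I would prove is that this dependency does not cascade with image size: phase (i) reads only \emph{original} pixel values, so every particle can complete it concurrently without waiting on any other particle's phase-(i) output, and at each subsequent boundary a particle waits only on the fixed-radius neighborhood it will next read. Consequently the dependency graph between per-particle phase computations has constant depth at every boundary, independent of $h$ and $w$, so propagating the completion flags over a constant-radius neighborhood adds only $O(1)$ rounds per boundary.

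I expect the main obstacle to be the asynchronous bookkeeping at these boundaries: one must rule out a chain in which a slow particle forces a neighbor to wait, which forces that neighbor's neighbor to wait, producing a delay that scales with the image. I would handle this exactly as in Lemma~\ref{setupdomination}, by exhibiting a dominating parallel schedule in the sense of Definition~\ref{parallelround} in which all particles set their phase-(i) completion flags in a single parallel round; each later boundary then clears in one additional parallel round because each particle reads only its fixed neighborhood, and the asynchronous execution is dominated by this schedule. Summing the constant number of phases together with their constant-radius gathers yields the $O(1)$-round bound once the $\Theta(n')$ setup of Theorem~\ref{matrixthetanp} has completed, as claimed.
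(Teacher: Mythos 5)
Your proposal is correct and follows essentially the same route as the paper: decompose Canny into a constant number of phases (noise filtering, Sobel, non-maximum suppression, hysteresis), observe that each phase depends only on data within a constant-radius neighborhood, and conclude each phase costs $O(1)$ asynchronous rounds. The paper's proof simply counts these rounds directly (two for each convolution, one per certainty level); your extra care about the \emph{noiseFinished} boundary and the dominating parallel schedule is a harmless elaboration of the same argument rather than a different approach.
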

\subsection{Color Transformations}
For color transformations, each pixel in the input image must be described in terms of the three color values. In this application, the input matrix has a row corresponding to each pixel of the original image and three columns corresponding to red, green, and blue. The transformation matrix is then streamed into the system as a sequence of vectors, each of which is multiplied by the matrix. The values in the transformation matrix determine the operation, such as filtering or saturation changes.
\section{Simulation Results}

As expected, Figure~\ref{allresults}a. shows that the number of rounds required for the binary counter to reach a value $v$ increases linearly with $v$. The results shown are each for a set of $10$ particles arranged in a line before the system begins to execute. Value counted is the number of distinct counter tokens fed into the system by the seed particle.

For matrix-vector multiplication, the experiments in Figure~\ref{allresults}b. show an approximately linear increase in the number of rounds for system setup and execution as the number of particles for the matrix-vector structure, $n'$, increases. 

In Figure~\ref{allresults}c. we show two examples of edge detection on small images. The implementation uses only a $3\times 3$ Gaussian kernel created with a value of $\sigma = 1$ for blurring. It also discards an outer boundary at each step rather than using an inference method to fill in nonexistent values around edge pixels, so the images are padded with borders of zero-value pixels before inputted into the system. Figure~\ref{allresults}c. shows the results of edge detection in a simple $10\times 10$ shape and a more complex $16\times 16$ image of a coin. Only the red values from the RGB information in the coin image were used.

Figure~\ref{allresults}d. shows the results of color transformations by multiplying an image matrix by a $3\times 3$ operator. The upper right example shows increased saturation, the bottom left shows conversion to grayscale, and the lower right shows color filtering.
\begin{figure}[h]
	\centering
	\includegraphics[width=0.5\textwidth]{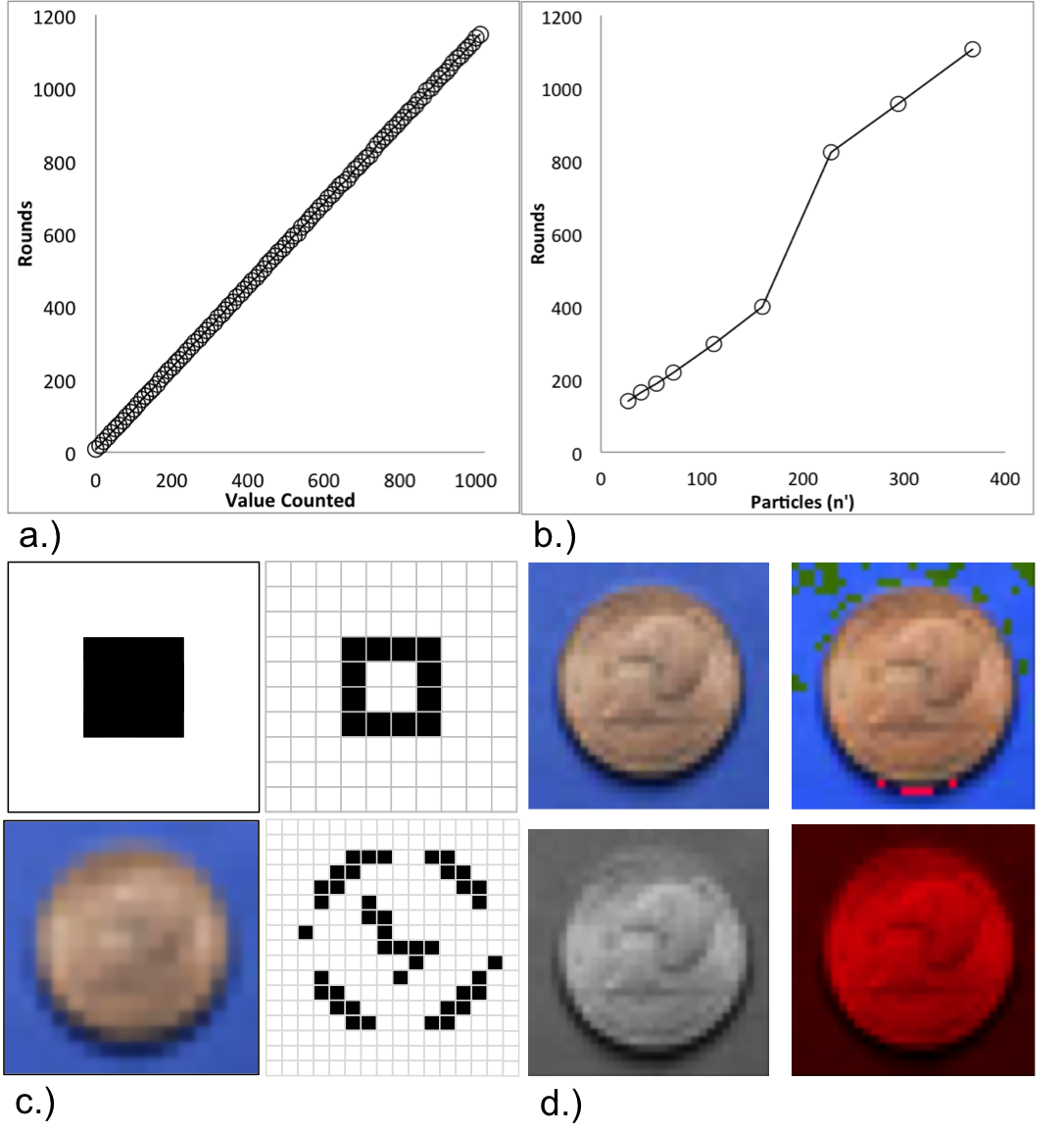}
		\caption{a.) Number of asynchronous rounds per value of $v$ counted in the binary counter; b.) number of asynchronous rounds per vector dimension in matrix-vector multiplication; c.) edge detection results; d.) color transformation results}
		\label{allresults}
\end{figure}

\section{Discussion}

We have described basic computational algorithms that can be used in much larger computing applications. Due to the limitations of the system receiving input through a seed particle, the binary counter requires $\Theta(v)$ asynchronous rounds to count to a value of $v$. The setup of the matrix-vector multiplication system is similarly limited by the input and time to assemble the structure of particles, so it requires $\Theta(n')$ rounds to setup the $n'$ particles used to represent the matrix, vector, and the vector of the product. However, the actual matrix-vector multiplication operations benefit from the parallelism of the system and each matrix-vector multiplication requires only $\Theta(h+w)$ asynchronous rounds (recall that $h$ is the matrix height and $w$ is the matrix width). This is especially beneficial for a matrix-matrix multiplication which requires only one execution of the setup algorithm (excluding the setup of additional results particles) to multiply an input matrix by each column of the other input matrix.

Possible improvements on the setup time limitation to the efficiency could include using more particles than will actually be needed to more quickly approximate the structure for the system. For instance, it may be useful to apply compression, as defined by~\cite{cannon2016markov} to be minimizing the outermost perimeter of the configuration of particles. If a compression algorithm produced a particle configuration that contained the necessary structure for one of our algorithms, it may be more efficient to execute compression and then select the useful structure as a subset of the compressed system.

There are many other applications that could use our basic counting and matrix multiplication algorithms and setup in a programmable matter context. 
\bibliographystyle{elsarticle-num}
{\small \bibliography{citations}}

\clearpage
\appendix

\subsection{Spanning Forest Primitive}\label{spanningforestappendix}
The \emph{spanning forest primitive} is used at the beginning of system execution to organize particles into a spanning forest \textit{F}, which defines how particles can move while preserving connectivity. At round $0$, particles are idle, except the seed particle. 
\begin{algorithm}
\caption{Spanning Forest Primitive}
\label{alg:spanningForestAlgorithm}
\begin{algorithmic}
\State Define the {\em structure} as the set of particles whose state is in $\{$ {\em seed,matrix,vector,prestop,result}$\}$ (Section~\ref{prelimsection})
\State  A particle $P$ acts depending on its state:
\State  \textbf{idle}:  If $P$ is adjacent to the structure, it becomes a \emph{leader} particle. If a neighbor $P'$ is a leader or a follower, $P$ sets the flag $P.parent$ to the label of the port to $P'$ and becomes a \emph{follower}.
        If none of the above applies, $P$ remains idle.
          
    \State    \textbf{follower}: If $P$ is contracted and adjacent to the structure then $P$ becomes a \emph{leader} particle. If $P$ is contracted and has an expanded parent, then $P$ initiates a $handover$ motion; otherwise, if $P$ is expanded, it considers the following two cases: $(i)$ if $P$ has a contracted child particle $q$, then $P$ initiates $handover$ motion; $(ii)$ if $P$ has no children and no idle neighbor, then $P$ contracts.
\State
        \textbf{leader}: 
        Particle $P$ will expand when possible to move around the perimeter of the structure in the counterclockwise direction, until reaching a position to become a member of the structure (Algorithm~\ref{freeparticle}).      \\ 

      \end{algorithmic}
\end{algorithm}

	\subsection{Matrix-Vector Multiplication Pseudocode}
\label{appendix:matrixcode}

 \begin{algorithm}
\caption{Free Particle P}\label{freeparticle}
\begin{algorithmic}[H]
\State $P.neighbors = $  neighbor set
\State $P.state \in \{Leader,Prestop,Vector,Matrix,Result\}$, state of the particle
\Procedure{JoinStructure}{}
\If{$P.position$ will eventually be a structure position  }
\State $p.state = Prestop$
\EndIf
\If{ $N\in P.neighbor$ with $N.vectorFlag$ pointing to $P$ }
 \State $P.state = Vector$
\EndIf
\If{ $N\in P.neighbor$ with $N.matrixFlag$ pointing to $P$ }
\State $P.state = Matrix$
\EndIf
\If{ $N\in P.neighbor$ with $N.resultFlag$ pointing to $P$ }
\State $P.state = Result$
\EndIf
\EndProcedure
\end{algorithmic}
\end{algorithm}

 \begin{algorithm}
\caption{Seed Particle S}\label{seedparticlesetup}
\begin{algorithmic}[H]
\State $S.mode \in \{matrix,vector,none\}$, current type of value being passed from stream
\State $S.right =$ neighbor to right (across row)
\Procedure{SeedParticleSetup}{}
\State $S.mode = matrix$
\Switch{Type of value $v$ received}
	\Case{$streamValue$}	
	\If{$S.mode = matrix$}
		\State $mvToken = $ new $matrixValueToken$
		\State  $mvToken.value = v$
			\If{$S.right$ exists}
				\State send $mvToken$ to $S.right$
			\Else
				\State set $vectorFlag$ to right
				\State when particle stops at flag such that $S.right$ exists, pass $mvToken$ to $S.right$
			\EndIf
	\EndIf
	\If{$S.mode = vector$}
		\State $vvToken = $ new $vectorValueToken$
		\State  $vvToken.value = v$
		\State Pass $vvToken$ to $S.right$
	\EndIf
	\EndCase
	\Case {$endOfColumn$}
		\State $eocToken =$ new $end OfColumnToken$
		\State send $eocToken$ to $S.right$
	\EndCase
	\Case{$endOfMatrix$}
		\State $mode = vector$
	\EndCase
	\Case{ $endOfVector$}
		\State $mode = none$
		\State $sToken =$ new $startMultiplicationToken$
		\State send $sToken$ to $S.right$
	\EndCase
\EndSwitch
\EndProcedure
\end{algorithmic}
\end{algorithm}

 \begin{algorithm}
\caption{Matrix Particle M}\label{matrixparticlesetup}
\begin{algorithmic}[H]
\State $M.down =$ neighbor below (down column)
\State $M.value=$ value of matrix at location
\State $M.right =$ neighbor to right (across row)
\Procedure{MatrixParticleSetup}{}
\State $M.value = null$
\If{receive matrixValueToken $t$ with value $m$}
\If{$M.value = null$}
\State $M.value = m$
\Else
\If{M.down exists}
\State pass $t$ to $M.down$
\Else
\State set $matrixFlag$ down
\State when particle stops at flag such that $M.down$ exists, pass $t$ to $M.down$
\EndIf
\EndIf
\EndIf
\If{receive $vectorValueToken$ with value $v$}
\State create v instances of $productToken$
\While {$M.right$ has available space and $M$ has a $productToken$}
\State pass a $productToken$ to $M.right$
\EndWhile
\EndIf
\EndProcedure
\end{algorithmic}
\end{algorithm}

\begin{algorithm}
\caption{Vector Particle V}\label{vectorparticlesetup}
\begin{algorithmic}[H]
\State $V.right =$ neighbor to right (across row)
\State $V.right' =$ neighbor to right plus one rotation clockwise
\State $V.down =$ neighbor below (down column)
\State $V.value=$ value of vector at location
\State $V.columnFinished =$ when true, send values across to other columns instead of down
\Procedure{VectorParticleSetup}{}
\State $V.value = null$
\Switch{Type of token $t$ received}
	\Case{$vectorValueToken$ with value $v$}
		\If{$V.value = null$}
			\State $V.value = v$
			\State pass $t$ to $V.Down$
			\State create $resultCounterToken$ and pass it to $V.right$
		\Else
			\State pass $t$ to $V.Right$
		\EndIf

	\EndCase
	\Case{$matrixValueToken$}
		\If{$V.columnFinished = false$}
			\If{V.down exists}
				\State pass $t$ to $V.down$
			\Else
			\State set $matrixFlag$ downward
			\State when particle stops at flag such that $V.down$ exists, pass $t$ to $V.down$
			\EndIf
		\Else
			\If{$V.right$ exists}
				\State pass $t$ to $V.right$
			\Else
				\State set $vectorFlag$ to right
				\State when particle stops at flag such that $V.right$ exists, pass $t$ to $V.right$
			\EndIf
		\EndIf

	\EndCase
	\Case{$resultCounterToken$}
			\If{$V.right$ exists}
				\State pass $t$ to $V.right$
		\ElsIf{$V.right'$ exists}
				\State pass $t$ to $V.right'$
			\Else
				\State set $resultFlag$ to $right+1$ direction
				\State when particle stops at flag such that $V.right'$ exists, pass $t$ to $V.right'$
			\EndIf

	\EndCase
	\Case{$EndOfColumnToken$}
		\If{$V.columnFinished = false$}
				\State $V.columnFinished = true$
		\Else
			\State pass $t$ to $V.right$
		\EndIf
	\EndCase
\EndSwitch
\EndProcedure
\end{algorithmic}
\end{algorithm}

 \begin{algorithm}
\caption{Result Particle R}\label{resultparticlesetup}
\begin{algorithmic}[H]
\State $R.value = 0$
\State $R.right =$ neighbor to right (across row)
\State $R.down =$ neighbor below (down column)
\State $maxValue = $ maximum value a particle can hold before a carryover is needed
\State $resultRowCounterValue= $ used for $resultCounterTokens$ to recruit $\log w$ result columns
\Procedure{ResultParticle}{}
\If{receive $totalToken$}
\State $R.value ++$
\EndIf
\If{$R.value = maxValue$ }
\State  $tToken =$ new $totalToken$
\If{$R.right$ does not exist}
\State set $resultsFlag$ to right
\State when particle stops at flag such that $R.right$ exists, pass $tToken$ to $R.right$
\ElsIf{$R.right.value<maxValue$}
\State pass $totalToken$ to  $R.right$
\State $R.value = 0$
\EndIf
\EndIf
\If{Receive  $resultCounterToken$}
\State $resultRowCounterValue ++$
\If{$resultRowCounterValue = maxValue$}
\State $R.resultRowCounterValue = 0$
\State pass a new $resultCounterToken$ to $R.right$
\EndIf
\EndIf
\EndProcedure
\end{algorithmic} 
\end{algorithm}

\clearpage
\subsection{Proofs of Section~\ref{sec:analysisbinaryparticle}}
\label{appendix:counter}
\begin{proof}
\textbf{[Lemma~\ref{asyncdominates}]}
We prove by induction on rounds. Let $(C_0',C_1',...C_f')$ be an asynchronous movement schedule (Definition~\ref{asyncdef}) and let $(C_0,C_1,...C_f)$ be a parallel movement schedule such that $C_0 = C_0'$ (Definition~\ref{syncdef}).
For the base case, both schedules start in the same configuration, i.e. $C_0 = C_0'$, so the parallel version is dominated. 
Suppose $C_i$ is the configuration after round $i$ of the parallel schedule and $C_i'$ is the configuration after $i$ rounds of the asynchronous schedule, and assume $C_{i-1}\preceq_B C_{i-1}'$. For $C_{i}\succ C_{i}'$ to happen, there would have to be a carryover that moves forward in the parallel schedule but does not move in the asynchronous schedule. Consider a carryover $t$ with $p_{C_{i-1}}(t) = j$.\\\\
Case 1: Suppose $p_{C_{i-1}'}(t) \neq j$. Since $C_{i-1}'\succeq_B C_{i-1}$, we have $p_{C_{i-1}'}(t) >j$. Since $t$ can move at most one position forward in the parallel schedule in one round, $p_{C_i'}(t)\geq p_{C_i}(t)$, and $C_i\preceq_B C_i'$. \\\\
Case 2: Otherwise, suppose $p_{C_{i-1}'}(t) = j$. Then $t$ will only move forward if another carryover is handed to $t$'s current particle. If $P_j$ does not receive such a new carryover in $C_i'$, it must be that $P_{j-1}$ was not holding two carryovers in $C_{i-1}'$, so $P_{j-1}$ is not currently executing $sendToken()$. Then since $C_{i-1}\succeq_B C_{i-1}'$, $P_{j-1}$ can not have two carryovers in $C_{i-1}$, because if it had two, one would be farther forward than in $C_{i-1}'$. Thus $P_{j-1}$ is also not currently executing $sendToken()$. Then there is no way $P_j$ can receive a new carryover, so the state of $P_j$ remains the same in $C_i$ and $C_i'$ from the previous round, so $C_i \preceq_B C_i'$. \\\\
 Note that it is also possible for the asynchronous schedule to make greater progress than the parallel version. If particle $P_{N+1}$ is not activated until after the particles $P_0,P_1,...P_{N}$ are activated in the same order repeatedly, a carryover can move up to $N$ distance in a single asynchronous round.
\end{proof}
\begin{proof}
\textbf{[Lemma~\ref{counter:syncOn}]}
Since a displayed bit is flipped ($P_j.display$ is changed for some $j$) every time a carryover is moved, if a carryover moves at every round where $C_i\neq C_f$, we can apply amortized analysis in terms of displayed bit flips only.\\ 
Consider a configuration $C_i \neq C_f$. Then of the carryovers not in their final position, there is token $t$ held by the particle with highest index; let $p_{C_i}(t) = j$. Then consider $P_{j+1}$: since every carryover held by a particle $P_k$ with $k>j$ is in its final position, there cannot be two carryover tokens held by $P_{j+1}$, so in configuration $C_{i+1}$, particle $P_{j+1}$ can accept $t$; thus a carryover has moved and a bit has flipped from $C_i$ to $C_{i+1}$. \\
$P_0$ flips its displayed bit once per each parallel round, $P_1$ flips every other parallel round; for $j=  0,1...\lceil \log v\rceil$, the particle $P_j$ flips the displayed bit $\lceil \frac{v}{2^j}\rceil$ times when counting to the value $v$ (i.e. processing $v$ carryovers to their final positions), and $P_j$ for $j>\lceil \log v\rceil$ does not ever flip. Then the total number of flips, and thus carryover movements, is:
$$\displaystyle\sum_{j=0}^{\lfloor \log v\rfloor} \lfloor \frac{v}{2^j}\rfloor < v \sum_{j=0}^\infty \frac{1}{2^j} = 2v$$
 so the algorithm runs in $O(v)$ parallel rounds.\\
\end{proof}

\begin{proof}
\textbf{[Theorem~\ref{counter:asyncOn}]}
By Lemma \ref{asyncdominates}, after any given number of rounds the asynchronous counter will be at least as close to finished as the parallel counter. By Lemma \ref{counter:syncOn} the parallel counter counts to $v$ in $O(v)$ rounds and so the asynchronous counter does as well. \\
The counter receives $v$ carryovers originating at the seed particle, $S$. To reach the final configuration $C_f$, each carryover must be created and handed off to particle $P_0$. Since $S$ and $P_0$ can perform at most one token handoff per round, the $O(v)$ bound cannot be improved and the asynchronous counter completes in $\Theta(v)$ rounds.
\end{proof}

\subsection{Proofs of Section\ref{sec:analysismatrixmult}}
\label{appendix:matrixmult}
\begin{proof}
\textbf{[Theorem~\ref{streamingcorrect}]}

It can be shown by induction that each value reaches the correct position for the system to be executed. For the base case, consider the first value the seed receives, $m_{0,0}$. Following Algorithm~\ref{seedparticlesetup}, the seed particle sets the $vectorFlag$ in the direction toward where the vector will be built. Since free particles move along the perimeter of the structure, a free particle is eventually in the position pointed to by the seed's $vectorFlag$, and thus stops and becomes a vector particle. Upon receiving the matrix value $m_{0,0}$, by Algorithm~\ref{vectorparticlesetup}, the first vector particle will set the $matrixFlag$ pointing in the direction to build the column (determined relative to the position of the seed), and a free particle will eventually stop at position $M_{0,0}$. When the vector particle passes the value to that matrix particle, $m_{0,0}$ has reached its correct position, $M_{0,0}$.\\ 
Consider a matrix value $m_{u,v}$ (with $(u,v)\neq (0,0)$) somewhere in the stream, so it should be in row $u$, column $v$ of the final $h \times w$ matrix, counting from the corner at which that the seed initiates the construction. Then if the stream is correct, $m_{u,v}$ appears in the stream after exactly $v-1$ end of column markers and is the $u+1^{th}$ value after the most recent column marker, $e_{v-1}$. Suppose all the values up to $m_{u,v}$ have been placed correctly in particles, so there are $v$ completed columns of $h$ particles and $u$ particles in the $v+1^{th}$ column. There are two cases for how value $m_{u,v}$ is handled.\\\\
Case 1: Suppose $i=0$, so $m_{u,v}$ corresponds to the first particle in a new column. Then $v$ $EndOfColumn$ tokens have entered the system, so the first $v$ vector particles have set $columnFinished$ to true and the $v^{th}$ vector particle has recruited vector particle $V_{v}$. Then $m_{u,v}$ will be passed by each vector particle until it reaches $V_{v}$, and since $V_{v}$ does not have $columnFinished$ set to true, it will set the value of $V_v.matrixFlag$ to point to where the column should be constructed, as stated in Algorithm~\ref{vectorparticlesetup}. As in the base case, a free particle will reach the position pointed to by  $V_v.matrixFlag$ and become a matrix particle. $V_v$ will then pass $m_{u,v}$ to the new matrix particle, $M_{u,v}$, so $m_{u,v}$ has reached the correct final position.\\\\
Case 2: Otherwise suppose $u>0$. Since the first $v-1$ vector particles have set $columnFinished$ to true they will pass $m_{u,v}$ to the next vector particle to the right, so it will reach the $v^{th}$ vector particle, $V_{v}$. Since $V_v$ has not set $columnFinished$ to true, it will pass $m_{u,v}$ down the column, as stated in Algorithm~\ref{vectorparticlesetup}. Each of the first $u-2$ matrix particles have a neighbor below themselves, so they will also pass $m_{u,v}$ down the column, so that $M_{u-1,v}$ will hold value $m_{u,v}$. Following Algorithm~\ref{matrixparticlesetup}, $M_{u-1,v}$ will set $M_{u-1,v}.matrixFlag$ to point downward (i.e. away from $M_{u-2,v}$). As in the base cases, eventually there will be a free particle in the position pointed to by $M_{u-1,v}.matrixFlag$, and that free particle will become matrix particle $M_{u,v}$. Then $M_{u-1,v}$ will pass $m_{u,v}$ to $M_{u,v}$, and so the value will be at row $u$ and column $v$ of the matrix. Since no matrix value moves after reaching its correct position, the final configuration is correct.
\end{proof}

\begin{proof}
\textbf{[Lemma~\ref{setupdomination}]}

 Let $(C_0',C_1',...C_f')$ be an asynchronous movement schedule (Definition~\ref{asyncdef}). We show a  parallel movement schedule $(C_0,C_1,...C_f)$ such that $C_0 = C_0'$ (Definition~\ref{syncdef}) can be selected such that $C_i'\preceq C_i'$ for all $i$ such that $0\leq i \leq f$.
 
First, consider particle positions. Free particles follow the spanning forest primitive defined in Section~\ref{spanningforestappendix}. If we consider the structure as a tree rooted at the seed, the parallel schedule $(C_0,C_1,...C_f)$ is a forest schedule. Furthermore, each particle which will eventually become part of the structure follows a predefined path and moves whenever possible, so the parallel schedule $(C_0,C_1,...C_f)$ is a \emph{greedy forest schedule} as defined in~\cite{DBLP:conf/dna/DerakhshandehGP16}. Then we can apply Lemma 3 from~\cite{DBLP:conf/dna/DerakhshandehGP16} which claims that a parallel schedule $(C_0,C-1,...C_f)$ with $C_0 = C_0'$ can always be chosen such that $C_i\preceq C_i'$ for all $i$ in terms of particle positions.

 Next, we induct on rounds to show that a parallel schedule selected based on particle positions is also dominated by the asynchronous schedule in terms of token positions.  For the base case, both schedules start in the same configuration, so the asynchronous schedule dominates the parallel version. Then let $C_i$ be the configuration of the parallel version after $i$ rounds and $C_i'$ be the configuration of the asynchronous version after $i$ rounds. Suppose $C_{i-1}\preceq_M C_{i-1}'$. We show it is not possible that $C_i\succ_M C_i'$. To proceed, we can assume $C_{i-1} = C_{i-1}'$ because if $C_{i-1}\prec_M C_{i-1}'$, some amount of progress can be removed from the asynchronous execution to produce $C_{i-1}=C_{i-1}$. In the parallel schedule a token will move forward at most by one position, if the position at $p_{C_i}(t)+1$ becomes open during the round $i$. This means that if there is a sequence of tokens in adjacent particles and there is an empty particle on the end in the direction of token motion, all of the tokens will move forward by one position, as described in Definition~\ref{tokenpathdef}. In the asynchronous schedule, we preserve this ability to guarantee forward motion of tokens by allowing the capacity of each particle to be two tokens. At the beginning of round $i$, if $C_{i-1} = C_{i-1}'$, each particle has at most one token (since token capacity in the parallel configuration is one). Then, in any order, every token can move forward by one position as long as the particle holding it activates, which is guaranteed in the course of a single asynchronous round.  Additionally, since tokens do not follow paths that branch and merge, execution order in an asynchronous round does not affect token paths or ordering (when holding two tokens a particle preserves order). Thus when the particle at $p_{C_i}(t)$ is activated, the token $t$ will at least move forward one spot regardless, and in no case will a token move in the asynchronous schedule and not in the parallel schedule to produce $C_i \succ C_i'$.\\\\

\end{proof}

\begin{proof}
\textbf{[Theorem~\ref{matrixOnp}]}

 We assume there are exactly $n'$ particles (other than the seed) in the system. First suppose particles are present in every final position of the system from the first round, so that when a particle sets a flag to create an additional matrix or vector particle, a free particle is already in position to become the new matrix or vector particle. Then at each construction position it takes $O(1)$ rounds to recruit a new particle because the previous particle (in the structure) sets a flag and then the new particle, already in position, changes state. Pipelining the tokens in the parallel version as in Lemma~\ref{setupdomination} means that after a new particle is recruited it again takes only $O(1)$ rounds for it to receive its position value and the position value for the next position, which it is then responsible for recruiting. Thus the total setup is $O(1)$ for each of the $O(n')$ matrix particles. 
 
 Then for the set of vector values, the total time for the values to each reach their corresponding vector particle is equal to the time for the last vector value to reach the particle farthest from the seed. Thus it takes $O(w)$ rounds for vector values to fill into the particles already in place for the vector. 
 
 Finally, as each vector particle receives a vector value, they create and send a result counter token. These tokens are used to determine an upper bound on how many particles to recruit to store the results. The particles on the end of what will be the results block (the top row of blue particles in Figure~\ref{setupfig}) act as a counter, using result counter tokens, to initiate construction of $\log_c{(m^2w)}$ columns. The result counter tokens move simultaneously with the vector values, so when the last vector value gets to the farthest vector particle, the result counter tokens have already started to move into the end row of result particles. A new result counter token will then arrive at $R_{0,0}$, and thus enter the counter, at each parallel round, so it will take $O(w)$ rounds to finish counting as in Theorem~\ref{counter:syncOn}. 
 Thus the total number of rounds to finish setup when no particle motion is needed is $O(n')+O(w)+O(w) = O(n')$.
 
Then consider a system using the particle stopping algorithm from an initial spanning forest configuration, so that not all final positions are immediately occupied with free particles. The initial spanning forest setup requires $O(n')$ parallel rounds. Then for column $v$, suppose column $v-1$ (if it exists) becomes completely finished, particles fill column $v$ (in $prestop$ state), and then column $v$ executes (i.e. begins passing tokens and changing state to Matrix or Result). In reality these phases overlap. 

Each column takes $O(h)$ rounds to fill with particles after previous column finishes executing token passes, so $O(hw) = O(n') $ for all column filling is added to the total time for token passing. This additional time is added by pausing and resuming the token passing to insert periods of column filling. The total is $O(n')$. 
\end{proof}
\begin{proof}
\textbf{[Theorem~\ref{matrixthetanp}]}

The parallel setup configuration completes in $O(n')$. Since at every round $i$, the parallel configuration is dominated by the asynchronous configuration, if $f$ is the final round of parallel execution, $C_f \preceq_M C_f'$, so the asynchronous execution is also complete by round $f$ and thus asynchronous construction is $O(n')$.\\
Consider the role of the seed particle in setting up the system. It must pass each position token (for matrix and vector particles) into the system via $V_0$ individually. Passing at most one token per round, it takes at least $hw$ rounds for all positions to enter the system, and thus setup is $\Omega(n')$.
\end{proof}
\begin{lemma}\label{flattenlemma}
In parallel execution, all trees flatten into a line of particles to supply construction after $V_0$. This line never has a gap at the current location of construction (i.e. current position requiring a particle to stop and join the system).
\end{lemma}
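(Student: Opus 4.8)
The plan is to analyze the spanning forest primitive (Algorithm~\ref{alg:spanningForestAlgorithm}) under a parallel schedule and establish the two claimed invariants separately: that the forest collapses into a single supply line past $V_0$, and that this line stays gap-free at the active construction position. First I would characterize the construction frontier. Because each matrix or vector position is created by a flag pointing from an already-placed structure particle to an adjacent empty node (Algorithms~\ref{matrixparticlesetup} and~\ref{vectorparticlesetup}), the sequence of positions at which free particles must stop forms a connected path along the boundary of the current structure. By the leader rule in Algorithm~\ref{alg:spanningForestAlgorithm}, every leader traverses this same boundary counterclockwise until it meets a flag, so the leaders arrive at construction positions in exactly the order in which those positions are requested.

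Next I would prove flattening by induction on parallel rounds. Since every leader travels the common counterclockwise boundary toward the frontier, the separate trees of the forest funnel into a single queue: when the frontmost leader stops to join the structure, the next particle behind it advances to the head, and so on. Within each tree the follower dynamics then remove branching, because in a parallel round an expanded follower with a contracted child performs a handover while an expanded follower with neither a child nor an idle neighbor contracts; in both cases the particle advances maximally toward its parent and, transitively, toward the front. This greedy advance matches the \emph{greedy forest schedule} of~\cite{DBLP:conf/dna/DerakhshandehGP16} already invoked in the proof of Lemma~\ref{setupdomination}, which lets me appeal to the same domination framework to conclude that side branches shorten each round until all particles lie along the single boundary-following line behind $V_0$.

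For the gap-free property I would argue by contradiction using the connectivity invariant of the spanning forest together with the greedy advance. Suppose at the start of some parallel round the active construction position $p$ is empty while particles are queued immediately behind it. Since those particles form a connected chain anchored to the structure and each moves forward whenever the node ahead is free (Definition~\ref{parallelround}), the particle adjacent to $p$ would have expanded or handed over into $p$ during the previous round, contradicting the assumption that $p$ is empty. Because the supply of particles is sufficient (the $n'$ particles assumed in Theorem~\ref{matrixOnp}) and the flattened line terminates exactly at the frontier, a particle is therefore always present at $p$ whenever the flag is set.

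The main obstacle I expect is the interaction between the leaders' counterclockwise perimeter traversal and the simultaneous handover-driven advance of the trailing followers, specifically keeping the no-gap invariant through handovers: a contraction in a handover momentarily vacates a node, so the proof must show that the replenishing particle behind it fills that node within the same parallel round, so that no gap ever opens at the frontier even transiently at round boundaries. Pinning down this synchronization of the counterclockwise motion with the handover-based replenishment is the delicate point that the argument must handle with care.
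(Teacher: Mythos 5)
Your overall strategy matches the paper's: induct on parallel rounds, use the counterclockwise perimeter traversal of leaders, and invoke the greedy forest schedule of~\cite{DBLP:conf/dna/DerakhshandehGP16} to control how followers advance. However, you leave the decisive step unproven. You correctly identify that the whole lemma hinges on showing that when the particle at the head of the supply line contracts (or stops to join the structure), the node behind it is refilled within the same parallel round --- but you then defer this as ``the delicate point that the argument must handle with care'' rather than handling it. A proof that names its own missing step is still missing that step.

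The paper closes exactly this gap with a short case analysis that you could have made: since the schedule is a greedy forest schedule, Lemma~2 of~\cite{DBLP:conf/dna/DerakhshandehGP16} guarantees that every expanded parent has at least one contracted child in every configuration. Hence when an expanded leader $P_j$ prepares to contract, either a leader immediately behind it performs a handover with it, or (if no such leader exists) one of $P_j$'s own contracted followers performs the handover and becomes a leader. In either case the contraction is item~4 of Definition~\ref{parallelround} --- an atomic handover in which another particle expands into the vacated node --- so no gap opens even transiently; there is no separate ``synchronization'' to establish. Your contradiction argument for gap-freeness implicitly assumes this (``each moves forward whenever the node ahead is free''), but without the expanded-parent-has-contracted-child invariant you cannot rule out that the particle behind the frontier is itself expanded with no contracted child available to hand over, which is precisely the configuration the cited lemma excludes. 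Separately, your appeal to the domination framework of Lemma~\ref{setupdomination} to conclude flattening is misdirected: domination compares asynchronous to parallel schedules and says nothing about trees collapsing into a line; the paper instead argues flattening from the worst-case placement of $P_0$ and the fact that leaders follow the perimeter and cannot circle indefinitely.
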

\begin{proof}
\textbf{[Lemma~\ref{flattenlemma}]}

We induct on rounds. For the base case, the system must be initially connected, and all particles are contracted. Thus the seed particle has at least one neighbor that will, as a result, become a leader. An initial neighbor of the seed will become $V_0$ and expand. Followers then adjacent to $V_0$ will initially be contracted. Since we are considering a greedy forest schedule, Lemma 2 from~\cite{DBLP:conf/dna/DerakhshandehGP16} applies and we have that every expanded parent in every $C_i$ for $1\leq i \leq f$ has at least one contracted child. From this point, leader particles will try to move as far counterclockwise around the existing system as possible, until reaching the end of the vector line, meaning they will not circle indefinitely. In the worst case, the first particle $P_0$, which will become $M_{0,0}$, is the first follower particle whose leader became $V_0$ (Figure~\ref{setupcol}a.). In this case there must be no particles moving around the structure in front of $P_0$, or they would be recruited to the position first. Then since we assume there are sufficiently many particles to build the system, at least $n'-1$ particles remain in either leader or follower state, so they are each part of spanning trees rooted at $P_0$ or part of trees rooted behind $P_0$. 

 Consider when any leader particle $P_j$ is expanded and preparing to contract. If there is another leader immediately behind $P_j$, it may perform a handover with $P_j$. If no such leader exists,  $P_j$ has followers, one of which will perform a handover and become a leader. Thus in any case there is a particle filling the position immediately behind $P_j$. This will continue to happen at every contraction of $P_j$ and all other leaders until follower particles are all leaders and the structure is complete. In any other starting configuration, particles in trees or on the structure in front of this initial position are ahead of the worst case.
\end{proof}

\begin{proof}
\textbf{[Lemma~\ref{linearcolfill}]}

By using the $prestop$ particle state to stop particles in the correct position regardless of how much progress the token passing for recruiting vector and matrix particles has made, we can consider the motions for column filling as completely independent of token passing for matrix and vector values.\\\\
Column $0$:  Initially, all particles are in trees rooted at the seed. We induct on $u$ where the particles of the first column are $\{M_{u,0}|0\leq u<h\}$. In the worst case described in Lemma~\ref{flattenlemma}, the particle $P_0$ immediately becomes a leader and travels $7$ units (taking $15$ parallel rounds) to reach the position for $M_{0,0}$ (Figure~\ref{setupcol}a.). $P_1$ is also a special case, taking only $2$ additional rounds so that $P_0$ and $P_1$ are in place after $17$ parallel rounds. 

Now suppose $P_k$ has finished as $M_{k,0}$ and consider $P_{k+1}$.  While $P_k$ is moving, in the worst case there is another follower particle which moves downward behind $P_k$, call it $P_{k+1}$. Since we assume there are sufficient particles in the system and in the worst case no trees were rooted closer to the construction site than $P_0$, $P_{k+1}$ must be a follower of $P_k$, or have become one in the processing of the spanning trees before reaching the structure. Then in parallel execution, $P_{k+1}$ remains at most one unit behind $P_{k}$, meaning it must travel two units after $P_k$ finishes. This requires $5$ rounds, so suppose $P_k$ finished in $17+5(k-1)$ asynchronous rounds. Since $P_{k+1}$ is at most $1$ space behind $P_k$, $P_{k+1}$ will be finished after $17+5k$ parallel rounds. Thus the $P_{h-1}$ finishes within $17+5(h-1)$ rounds, so the first column finishes in $O(h)$ parallel rounds.\\\\
Column $1$: When the last particle of column $0$ changes state to $prestop$ or $matrix$, the next leader, $P_{x}$, which will become $M_{0,1}$, is the neighbor of $M_{h-1,0}$ immediately counterclockwise from column $0$ (Figure~\ref{setupcol}b.). In this proof we ignore the time from when the last particle of column $0$ was in position until when it changed state since we are only considering the execution time of particle motions. It then takes $6$ rounds for $P_h$ to reach the position directly below that which will eventually become $M_{h-1,1}$. From there $2$ rounds are required to move each of the $h$ positions, so in total, $P_h$ takes $2h-6$ rounds to reach its final position and become $M_{0,1}$. Each subsequent particle in column $1$ follows $P_h$ in a connected line, so each additional particle requires only $1$ rounds to finish, so column $1$ is constructed in a total of $O(h)$ rounds.\\\\
Column $v$, $2\leq v <w$: Each subsequent column begins to fill when the previous row is completely occupied with $prestop$ or $matrix$ particles, so the first leader of these columns is always in the same relative position at the start of construction (Figure~\ref{setupcol}c.). It takes $2$ rounds for the particle to reach the relative starting position of the first leader constructing column $1$, and from there each column construction is the same and requires $O(h)$ parallel rounds.
\end{proof}
\begin{proof}
\textbf{[Lemma~\ref{matrix:syncOn}]}
We have shown that setup of the entire matrix-vector multiplication structure, including result particles, requires $O(n')$ rounds, so we now assume no further particle motion is necessary and consider only token motion for the actual multiplication. Since a token moves forward in parallel each round, the vector tokens take exactly $w$ rounds to move across the matrix from their start to finish locations.  For simplicity, a value $z$ resulting from a multiplication at a matrix particle is represented as $z$ copies of a product token, which can be passed individually to particles with unused storage capacity.  Then each product token originating at $M_{u,v}$ takes $w-v$ rounds to move from its creation location to its destination. The product tokens starting in row $h-1$ are created last, and the ones that start at $M_{h-1,0}$ have the farthest distance to reach the end of the matrix. Since each vector and matrix value is less than a constant maximum $m$, the maximum number of product tokens produced in a row of the matrix is $m^2w$. Then the farthest possible result particle a sum token may travel to is $\log( m^2w) = 2\log m + \log w =O(\log w)$, which is in addition to moving across at most $w$ matrix particles. Then product tokens reach their final position in $O(h+w+\log w) = O(h+w)$ rounds.
\end{proof}
\begin{proof}
\textbf{[Theorem~\ref{sumthetahw}]}
By Lemma \ref{setupdomination}, after any given number of rounds the asynchronous multiplier will be at least as close to finished as the parallel multiplier. By Lemma \ref{matrix:syncOn} the parallel multiplier finishes in $O(h+w)$ rounds and so the asynchronous multiplier does as well. \\
In the asynchronous execution, it is possible that vector tokens require only one round to go from their start to finish locations. However, each of the $w$ product tokens for a matrix row must pass through the first results particle for that row, and that particle can only process one product token per round. Thus the overall matrix-vector multiplication is $\Omega(h+w)$. 
\end{proof}

\begin{proof}
\textbf{[Theorem~\ref{sumthetahwy}]}
Consider matrix-matrix multiplication as a sequence of $y$ matrix-vector multiplications. The first matrix-vector multiplication completes in $\Theta(h+w)$ rounds, by Theorem~\ref{sumthetahw}. Each subsequent vector must be set up by streaming $w$ vector values into the system and moving one to each vector particle. This requires $O(w)$ parallel rounds (and thus $O(w)$ asynchronous rounds) since each vector value must be passed at most $w$ times. The seed must be activated at least once for each of the $w$ new vector values so the asynchronous system requires $\Theta(w)$ rounds to setup the new vector. Thus the total matrix-vector multiplication is $\Theta(y(h+w))$.
\end{proof}

\subsection{Proofs of Section~\ref{sec:edgedetect}}
\begin{proof}
\textbf{[Theorem~\ref{edgeconst}]}
Assume all particles are holding their pixel value, meaning setup has finished completely. In two asynchronous rounds, gathering surrounding raw pixel data and completing a convolution for noise filtering can be finished by every particle. In one round each particle will gather the values of its neighbors and in a second complete round each particle will gather the values of particles at a distance of two and complete the convolution operation. Gathering updated values and performing a second convolution for Sobel filtering (Algorithm~\ref{pixelcanny}) similarly requires only two single asynchronous rounds for the results of the first operation to propagate a distance of two hops. Finally, establishing edge certainty can be accomplished in as many asynchronous rounds as there are certainty levels. In the first round, very sure particles will establish themselves and in each subsequent round particles with sure neighbors will identify themselves.
\end{proof}

\end{document}